\newtheorem{corollary}{Corollary}
\newtheorem{lemma}{Lemma}
\newtheorem{theorem}{Theorem}
\newcommand{\E}{\mathbb{E}} 
\title{Efficient Deterministic Algorithms for Maximizing Symmetric Submodular Functions}
\author[1,2]{Zongqi Wan}
\author[1,2]{Jialin Zhang}
\author[1,2]{Xiaoming Sun}
\author[3]{Zhijie Zhang\thanks{Corresponding Author. Postal address: Fuzhou University, No.~2 Xueyuan Road, Minhou, Fuzhou, Fujian, China; E-mail: zzhang@fzu.edu.cn}}
\affil[1]{\small State Key Lab of Processors, Institute of Computing Technology, Chinese Academy of Sciences}
\affil[2]{\small School of Computer Science and Technology, University of Chinese Academy of Sciences}
\affil[3]{\small Center for Applied Mathematics of Fujian Province, School of Mathematics and Statistics, Fuzhou University}
\date{ }
\begin{document}

\maketitle

\begin{abstract}
    Symmetric submodular maximization is an important class of combinatorial optimization problems, including MAX-CUT on graphs and hyper-graphs. The state-of-the-art algorithm for the problem over general constraints has an approximation ratio of $0.432$~\cite{Feldman17}. The algorithm applies the canonical continuous greedy technique that involves a sampling process. It, therefore, suffers from high query complexity and is inherently randomized. In this paper, we present several efficient deterministic algorithms for maximizing a symmetric submodular function under various constraints. Specifically, for the cardinality constraint, we design a deterministic algorithm that attains a $0.432$ ratio and uses $O(kn)$ queries. Previously, the best deterministic algorithm attains a $0.385-\epsilon$ ratio and uses $O\left(kn (\frac{10}{9\epsilon})^{\frac{20}{9\epsilon}-1}\right)$ queries~\cite{chen2024guided}. For the matroid constraint, we design a deterministic algorithm that attains a $1/3-\epsilon$ ratio and uses $O(kn\log \epsilon^{-1})$ queries. Previously, the best deterministic algorithm can also attains $1/3-\epsilon$ ratio but it uses much larger $O(\epsilon^{-1}n^4)$ queries~\cite{LeeMNS09}. For the packing constraints with a large width, we design a deterministic algorithm that attains a $0.432-\epsilon$ ratio and uses $O(n^2)$ queries. To the best of our knowledge, there is no deterministic algorithm for the constraint previously. The last algorithm can be adapted to attain a $0.432$ ratio for single knapsack constraint using $O(n^4)$ queries. Previously, the best deterministic algorithm attains a $0.316-\epsilon$ ratio and uses $\widetilde{O}(n^3)$ queries~\cite{AmanatidisBM20}.

    \vspace{3mm}
    \noindent\textbf{Keywords}: symmetric submodular maximization; deterministic algorithm; approximation algorithm
\end{abstract}

\section{Introduction}

Submodular set functions have non-increasing marginal values as the set gets larger, which captures the effect of diminishing returns in reality.
The maximization of submodular functions is one of the central topics in combinatorial optimization.
It has found numerous applications, including viral marketing \cite{KempeKT03}, data summarization \cite{DasguptaKR13,TschiatschekIWB14}, causal inference \cite{SussexUK21,ZhouS16}, facility location \cite{AgeevS99}, maximum bisection \cite{AustrinBG16}.
The study of the submodular maximization problem has attracted a lot of attention in the literature.
Currently, the best algorithm achieves a $0.401$ approximation ratio under a general down-closed constraint \cite{buchbinder2023constrained}.
On the other hand, no algorithm can achieve an approximation ratio better than $0.478$ under the cardinality constraint \cite{Qi22} and the matroid constraint \cite{GharanV11}.

When the objective function enjoys additional properties, the problem often admits better approximation.
A well-studied case is the monotone submodular function.
With this objective, the problem admits $1-e^{-1}$ approximation under a general down-closed constraint \cite{CalinescuCPV11}.
This ratio is optimal even under the cardinality constraint \cite{NemhauserW78}.

In the absence of monotonicity, symmetry can also lead to better approximation.
A symmetric submodular function satisfies that any set and its complement have the same function value.
Examples include cut functions over undirected graphs and mutual information functions.
The study of symmetric submodular functions has received much attention in the literature \cite{Fujishige83,Queyranne98,GoemansS13,Feldman17,AmanatidisBM20,ChandrasekaranC21}.
Currently, the best algorithm attains a $0.432$ approximation ratio for maximizing symmetric submodular functions under a general down-closed constraint \cite{Feldman17}.

The algorithm in \cite{Feldman17} is based on the continuous greedy technique that involves a sampling process \cite{CalinescuCPV11,FeldmanNS11}.
Consequently, it is inherently randomized and suffers from high query complexity.
This hinders its applications in the real world and there is a demand for designing efficient deterministic algorithms for symmetric submodular maximization.
However, most such algorithms in the literature are designed for general non-monotone submodular functions. 
To the best of our knowledge, we only found two deterministic algorithms that are designed for symmetric functions. One is a $1/3-\epsilon$ approximation algorithm for matroid constraint~\cite{LeeMNS09}, which uses $O(\epsilon^{-1}n^4)$ queries. Another is the $0.316-\epsilon$ approximation algorithm for the knapsack constraint~\cite{AmanatidisBM20}, which uses $\widetilde{O}(n^3)$ queries. For the cardinality constraint, the best deterministic algorithm attains a $0.385-\epsilon$ ratio and uses $O\left(kn (\frac{10}{9\epsilon})^{\frac{20}{9\epsilon}-1}\right)$ queries~\cite{chen2024guided}. This algorithm is designed for general non-monotone submodular functions.


\subsection{Our Contributions}

In this paper, we present several efficient deterministic algorithms with improved approximation ratios or query complexity for maximizing symmetric submodular functions under various constraints.
Specifically,
\begin{enumerate}
	\item For the cardinality constraint, we design a deterministic algorithm that attains an approximation ratio of $\frac{1}{2}(1-e^{-2})\approx0.432$ and uses $O(kn)$ queries, where $n$ is the number of elements and $k$ is the size constraint.
	We also give a tight example showing that $\frac{1}{2}(1-e^{-2})$ is the best ratio that our algorithm can achieve.
	To accelerate the algorithm, we further design a randomized algorithm that attains an approximation ratio of $0.432-\epsilon$ and uses $O(k^2+n\log\epsilon^{-1})$ queries.
	Note that the algorithm is linear when $k=O(\sqrt{n})$.
	\item For the matroid constraint, we design a deterministic algorithm that attains an approximation ratio of $\frac{1}{3}-\epsilon$ and uses $O(kn\log \epsilon^{-1})$ queries, where $k$ is the rank of the matroid. Our algorithm achieves the same approximation ratio as the previous best deterministic algorithm~\cite{LeeMNS09}, but with a considerably lower query complexity.
	\item For the packing constraints with a large width, we present a deterministic algorithm that attains an approximation ratio of $0.432-\epsilon$ and uses $O(n^2)$ queries.
	For the knapsack constraint, the algorithm can be adapted to attain an approximation ratio of $0.432$ and use $O(n^4)$ queries.
\end{enumerate}
For the sake of comparison, we list the previous and our results in Table \ref{tab: main}.


\begin{table}[tb]
	\centering
	\begin{tabular}{ccccc}
		\hline
		Constraint & Function & Algo.~Ratio & Complexity & Type \\ \hline\hline
		General & Non-monotone & $0.401$ \cite{buchbinder2023constrained} & $\mathrm{poly}(n)$ & Rand \\ 
		General & Symmetric & $0.432$ \cite{Feldman17} & $\mathrm{poly}(n)$ & Rand \\ \hline\hline
		Cardinality & Non-monotone & $0.385-\epsilon$ \cite{chen2024guided} & $O\left(kn (\frac{10}{9\epsilon})^{\frac{20}{9\epsilon}-1}\right)$ & Det \\
		Cardinality & Symmetric & $0.432$ (Thm.~\ref{thm: cardinality}) & $O(kn)$ & Det \\
		Cardinality & Symmetric & $0.432-\epsilon$ (Thm.~\ref{thm: cardinality-random}) & $O(k^2+n\log\epsilon^{-1})$ & Rand \\ 
		\hline\hline
		Matroid & Non-monotone & $0.305-\epsilon$ \cite{chen2024guided} & $O\left(kn (\frac{10}{9\epsilon})^{\frac{20}{9\epsilon}-1}\right)$ & Det \\
            Matroid & Symmetric & $1/3-\epsilon$  \cite{LeeMNS09} & $O\left(\epsilon^{-1} n^4\right)$ & Det \\
		Matroid & Symmetric & $1/3-\epsilon$ (Thm.~\ref{thm: matroid}) & $O(kn\log \frac{n}{\epsilon})$ & Det \\ \hline\hline
		Packing & Symmetric & $0.432-\epsilon$ (Thm.~\ref{thm: packing}) & $O(n^2)$ & Det \\ \hline\hline
		Knapsack & Non-monotone & $0.25$ \cite{SunZZZ22} & $O(n^4)$ & Det \\
	    Knapsack & Symmetric & $0.316-\epsilon$ \cite{AmanatidisBM20} & $O(\frac{1}{\epsilon}n^3\log n)$ & Det \\
		Knapsack & Symmetric & $0.432$ (Thm.~\ref{thm: knapsack}) & $O(n^4)$ & Det \\
		\hline
	\end{tabular}
	\caption{Approximation algorithms for (symmetric) non-monotone submodular maximization under various constraints. ``Complexity'' refers to query complexity.
		``Rand'' is short for ``Randomized'' and ``Det'' is short for ``Deterministic''.}
	\label{tab: main}
\end{table}


\subsection{Related Work}

Early studies of submodular maximization, which date back to 1978, mainly focus on monotone submodular functions.
It was shown that no algorithm can achieve an approximation ratio better than $1-e^{-1}$ \cite{NemhauserW78}.
On the algorithmic side, a canonical greedy algorithm can achieve the optimal $1-e^{-1}$ ratio under the cardinality constraint \cite{NemhauserWF78}.
For the knapsack constraint, by combining the enumeration technique with the greedy algorithm, one can also achieve the $1-e^{-1}$ approximation ratio~\cite{KhullerMN99,Sviridenko04}.
Later, for the packing constraints with a large width, there is a multiplicative-updates algorithm that achieves the $1-e^{-1}$ approximation ratio \cite{AzarG12}.
For the matroid constraint, the greedy algorithm only attains a $1/2$ approximation ratio~\cite{FisherNW78}.
Vondr{\'{a}}k \cite{Vondrak08} made a breakthrough in 2008 by proposing the famous continuous greedy algorithm, which achieves a $1-e^{-1}$ approximation ratio under the matroid constraint.
The continuous greedy algorithm was later generalized to work for general down-closed constraints \cite{FeldmanNS11}.

For constrained non-monotone submodular maximization, no algorithm can achieve an approximation ratio better than $0.478$ under the cardinality constraint \cite{Qi22} and the matroid constraint \cite{GharanV11}.
The continuous greedy algorithm achieves $1/e$ approximation for this problem \cite{FeldmanNS11}.
After a series of works \cite{BuchbinderFNS14,EneN16,BuchbinderF19,tukan2024practical}, the best algorithm achieves a $0.401$ approximation ratio under a general down-closed constraint \cite{buchbinder2023constrained}.
The algorithm is randomized and suffers from high query complexity.
For the cardinality constraint, Buchbinder and Feldman~\cite{BuchbinderF18} designed a deterministic algorithm with $1/e$-approximation ratio and uses $O(k^3n)$ queries. Recently, Chen et al.~\cite{chen2024guided} provided a deterministic $(0.385-\epsilon)$-approximation algorithm with $O(kn (\frac{10}{9\epsilon})^{\frac{20}{9\epsilon}-1})$ queries. They also provided a deterministic $(0.305-\epsilon)$-approximation for the matroid constraint with the same query complexity.
For the knapsack constraint, the best deterministic algorithm attains a $1/4$ ratio and uses $O(n^4)$ queries \cite{SunZZZ22}.

Feige et al.~\cite{feige2011maximizing} first utilize the property of symmetric submodular function, they present a $(1/2-\epsilon)$-approximation algorithm for the unconstrained maximization problem of symmetric submodular function. Lee et al.~\cite{LeeMNS09} show a deterministic $(1/3-\epsilon)$-approximation for maximizing symmetric functions over matroid constraint. Both results are attained by the local search algorithm and they are not specially designed for the symmetric function. They only leverage the symmetric property in their analysis but do not exploit this property to modify the algorithm. Their algorithms also suffer from a high $O(\epsilon^{-1}n^4)$ query complexity. Feldman~\cite{Feldman17} was the first to design a specialized algorithm for symmetric submodular maximization over general downward close constraints. Using the continuous greedy technique, Feldman presented a $0.432$-approximation algorithm. In the case of the knapsack constraint, there exists a deterministic algorithm for symmetric submodular functions that achieves a $0.316-\epsilon$ ratio and utilizes $\widetilde{O}(n^3)$ queries \cite{AmanatidisBM20}.

\subsection{Paper Structure}

In Section \ref{sec: pre}, we introduce the symmetric submodular maximization problem formally.
In Section \ref{sec: cardinality}, we present our results for the problem under the cardinality constraint.
In Section \ref{sec: matroid}, we present our results for the problem under the matroid constraint.
In Section \ref{sec: packing}, we present our results for the problem under the packing constraints.
Finally, we conclude our paper in Section \ref{sec: conclusion}.

\section{Preliminaries}
\label{sec: pre}

Let $N$ be a finite ground set of $n$ elements.
For each set $S\subseteq N$ and element $u\in N$, we use $S+u$ to denote the union $S\cup\{u\}$ and $S-u$ to denote the difference set $S\setminus\{u\}$.
We also use $\bar{S}$ to denote the complementary set $N\setminus S$.
Let $f:2^N\rightarrow\mathbb{R}$ be a set function defined over $N$.
For any sets $S,T\subseteq N$ and element $u\in N$, we define $f(u\mid S)\coloneqq f(S+u)-f(S)$ and $f(T\mid S)\coloneqq f(S\cup T)-f(S)$.
$f$ is \emph{non-negative} if for any $S\subseteq N$, $f(S)\geq 0$.
$f$ is \emph{symmetric} if for any $S\subseteq N$, $f(S)=f(\bar{S})$.
$f$ is \emph{submodular} if for any $S,T\subseteq N$, $f(S)+f(T)\geq f(S\cup T)+f(S\cap T)$.
Equivalently, $f$ is submodular if for any $S\subseteq T$ and $u\notin T$, $f(S\cup\{u\})-f(S)\geq f(T\cup\{u\})-f(T)$.
Finally, we define $[n]\coloneqq \{1,2,\ldots,n\}$.

The symmetric submodular maximization problem can be formulated as $\max\{f(S):S\in\mathcal{I}\}$, where $f$ is the objective function and $\mathcal{I}$ is the constraint specifying the collection of feasible sets.
The function $f$ is assumed to be non-negative, symmetric and submodular.
Any algorithm for the problem can access the function via a \emph{value oracle}, which returns the function value $f(S)$ when set $S\subseteq N$ is queried.
The efficiency of the algorithm is measured by the query complexity, i.e.~the number of queries to the oracle.
The quantity should be polynomial in $|N|=n$.
Some well-studied constraints in the literature include:
\begin{itemize}
	\item \emph{Cardinality} constraint. For some $k\in\mathbb{N}_+$, $\mathcal{I}=\{S:|S|\leq k\}$.
	\item \emph{Matroid} constraint.
	A matroid system $\mathcal{M}=(N,\mathcal{I})$ consists of a finite ground set $N$ and a collection $\mathcal{I}\subseteq 2^N$ of the subsets of $N$, which satisfies a) $\emptyset \in \mathcal{I}$; b) if $A\subseteq B$ and $B\in \mathcal{I}$, then $A\in \mathcal{I}$; and c) if $A,B\in\mathcal{I}$ and $|A|<|B|$, then there exists an element $u\in B\setminus A$ such that $A \cup \{u\} \in \mathcal{I}$.
	A matroid constraint $\mathcal{I}$ is such that $\mathcal{M}=(N,\mathcal{I})$ forms a matroid system.
	For $\mathcal{M}$, each $A\in \mathcal{I}$ is called an \emph{independent set}.
	If $A$ is inclusion-wise maximal, it is called a \emph{base}.
	All bases of a matroid have an equal size, known as the \emph{rank} of the matroid.
	In this paper, we use $k$ to denote the rank.
	\item \emph{Packing} constraints.
	Given a matrix $A\in[0,1]^{m\times n}$ and a vector $b\in [1,\infty)^m$, the constraint can be written as $\mathcal{I}=\{S:Ax_S\leq b\}$, where $x_S$ represents the characteristic vector of $S$.
	When $m=1$, the constraint reduces to the canonical \emph{knapsack} constraint.
	The \emph{width} of the constraint is defined as $W=\min\{b_i/A_{ij}:A_{ij}>0\}$.
\end{itemize}

%
%


\subsection{Properties of Symmetric Submodular Functions}

The following lemma exploits the properties of the function and plays a central role in our algorithm design.
It is used for lower bounding $f(S\cup T)$ in terms of $f(T)$ in the absence of monotonicity.
A similar version for the multilinear extension of the function can be found in \cite{Feldman17}.

\begin{lemma}
	\label{lem: symmetric}
	Given a non-negative symmetric submodular function $f:2^N\rightarrow\mathbb{R}_+$ and a set $S\subseteq N$ such that $f(R)\leq f(S)$ for any $R\subseteq S$, then $f(S\cup T)\geq f(T)-f(S)$ for any $T\subseteq N$.
\end{lemma}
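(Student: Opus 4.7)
The plan is to apply submodularity to a carefully chosen pair of sets whose union is $N$ and whose intersection is exactly $T$, then exploit symmetry to convert the resulting large terms into small ones that either vanish by non-negativity or are subsets of $S$ (where the hypothesis on $S$ applies).

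Concretely, first I would apply the standard submodular inequality $f(A)+f(B)\geq f(A\cup B)+f(A\cap B)$ with $A=S\cup T$ and $B=\bar{S}\cup T$. These are chosen so that $A\cup B=N$ and $A\cap B=T$, producing
\[
f(S\cup T)+f(\bar{S}\cup T)\;\geq\;f(N)+f(T),
\]
so the target $f(T)$ already appears on the right-hand side.

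Next, I would use symmetry to rewrite the two large terms: $f(N)=f(\emptyset)$ and $f(\bar{S}\cup T)=f(\overline{\bar{S}\cup T})=f(S\setminus T)$. Combined with non-negativity $f(\emptyset)\geq 0$, the inequality simplifies to
\[
f(S\cup T)+f(S\setminus T)\;\geq\;f(T).
\]
Finally, since $S\setminus T\subseteq S$, the hypothesis gives $f(S\setminus T)\leq f(S)$, and rearranging yields the desired bound $f(S\cup T)\geq f(T)-f(S)$.

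The main obstacle is picking the right pair for submodularity. The naive choice $(S,T)$ only produces $f(S\cup T)+f(S\cap T)\leq f(S)+f(T)$, which is an \emph{upper} bound on $f(S\cup T)$, not the lower bound we want. Other natural candidates, such as $(S\cup T,\bar{S})$, yield only $f(S\cup T)\geq f(T\setminus S)-f(S)$, and bounding $f(T\setminus S)$ in terms of $f(T)$ by submodularity then costs a second copy of $f(S)$, giving a weaker bound of the form $f(T)-2f(S)$. The pair $(S\cup T,\bar{S}\cup T)$ is special because its intersection equals $T$ exactly, so $f(T)$ appears directly without a second submodular step, and the residual term $f(\bar{S}\cup T)$ symmetrizes to a subset of $S$ where the hypothesis can be used with no loss.
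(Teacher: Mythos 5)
Your proof is correct and is essentially the paper's own argument viewed from the complement side: the paper applies submodularity to the pair $(\bar{T}\cap S,\ \bar{S}\cap\bar{T})$, which is exactly the complementary pair of your $(S\cup T,\ \bar{S}\cup T)$, and both proofs then use $f(\emptyset)=f(N)\geq 0$ and the hypothesis applied to $S\setminus T\subseteq S$. No gaps.
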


\begin{proof}
	Since $f$ is non-negative, symmetric and submodular,
	\[ f(T)-f(S\cup T)=f(\bar{T})-f(\bar{S}\cap\bar{T})\leq f(\bar{T}\setminus\bar{S})-f(\emptyset)\leq f(\bar{T}\cap S)\leq f(S). \]
	The last inequality follows from the condition of the lemma.
\end{proof}

At first glance, the condition of Lemma \ref{lem: symmetric} looks difficult to meet.
Nonetheless, we introduce the \textsc{Delete} procedure, described as Algorithm \ref{alg: delete}, to turn any set $S$ into one that satisfies the condition.
The procedure can be regarded as the ``delete'' operation of the local search algorithm, see e.g.~\cite{LeeMNS09}.

\begin{algorithm}[tb]
	\SetAlgoLined
	\KwIn{set $S\subseteq N$.}
	
	\ForEach($\backslash\backslash$ In arbitrary order){$u\in S$}{
		\lIf{$f(u\mid S-u)<0$}{$S\gets S-u$.}
	}
    \Return{$S$.}
    \caption{\textsc{Delete}}
    \label{alg: delete}
\end{algorithm}

\begin{lemma}
	\label{lem: removal}
	Given a non-negative symmetric submodular function $f:2^N\rightarrow\mathbb{R}_+$ and a set $S$ returned by Algorithm \ref{alg: delete}, then $f(R)\leq f(S)$ for any $R\subseteq S$.
\end{lemma}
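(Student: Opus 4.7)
The plan is to establish the local property preserved by \textsc{Delete} and then telescope. First, I would argue that after Algorithm~\ref{alg: delete} terminates, the returned set $S$ satisfies $f(u \mid S - u) \geq 0$ for every $u \in S$. This is not immediate because elements are processed in arbitrary order and later deletions shrink $S$ after earlier keep-decisions, so one must check that the kept marginals remain non-negative. Here submodularity saves us: if $u$ was kept when the current set was some $S' \supseteq S_{\text{final}}$, then $f(u \mid S' - u) \geq 0$, and submodularity applied to $S_{\text{final}} - u \subseteq S' - u$ yields
\[
f(u \mid S_{\text{final}} - u) \;\geq\; f(u \mid S' - u) \;\geq\; 0.
\]

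Given this invariant, I would fix an arbitrary $R \subseteq S$ and enumerate $S \setminus R = \{u_1, u_2, \ldots, u_m\}$ in any order. Setting $S_0 = S$ and $S_i = S_{i-1} - u_i$, so that $S_m = R$, I would telescope
\[
f(S) - f(R) \;=\; \sum_{i=1}^{m} \bigl( f(S_{i-1}) - f(S_i) \bigr) \;=\; \sum_{i=1}^{m} f(u_i \mid S_i).
\]
Since $S_i \subseteq S - u_i$, submodularity gives $f(u_i \mid S_i) \geq f(u_i \mid S - u_i)$, and the invariant from the first paragraph makes the right-hand side non-negative. Hence every term in the telescoping sum is non-negative, yielding $f(S) \geq f(R)$.

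The main thing to be careful about is the first paragraph: one might initially claim that the invariant holds because each element is explicitly tested and either kept or discarded, but the test is performed with respect to the set at the moment of inspection, not the final set. The submodularity argument above is what closes this gap, and it is the only step where the proof requires more than bookkeeping. Everything else is a standard local-search telescoping identity, and symmetry of $f$ is not used in this lemma (it is the role of Lemma~\ref{lem: symmetric} to convert this ``no profitable deletion'' guarantee into the useful bound $f(S \cup T) \geq f(T) - f(S)$).
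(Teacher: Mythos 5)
Your proof is correct and follows essentially the same route as the paper's: a telescoping decomposition of $f(S)-f(R)$ over the elements of $S\setminus R$, with submodularity used to relate each telescoping marginal to the non-negative marginal certified at the moment the algorithm inspected that element. The only cosmetic difference is that you factor the argument through the explicit intermediate invariant $f(u\mid S-u)\geq 0$ for the final $S$ (applying submodularity twice), whereas the paper compares the telescoping marginals directly to $f(v_j\mid S_{v_j}-v_j)$ in a single application; both are sound.
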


\begin{proof}
	Let $S'$ be the value of $S$ before Algorithm \ref{alg: delete} and $S\setminus R=\{v_1,v_2,\ldots,v_{\ell}\}$, where $v_{j'}$ is visited before $v_j$ for $j'<j$ in Algorithm \ref{alg: delete}.
	Let $S_{v_j}$ be the value of $S'$ just before $v_j$ is visited.
	Clearly, $S\subseteq S_{v_j}\subseteq S'$ and hence $R\cup\{v_{j+1},v_{j+2},\ldots, v_{\ell}\}\subseteq S-v_j\subseteq S_{v_j}-v_j$.
	By submodularity,
	\[ f(S)=f(R)+\sum_{v_j\in S\setminus R} f(v_j\mid R\cup\{v_{j+1},v_{j+2},\ldots, v_{\ell}\})\geq f(R)+\sum_{v_j\in S\setminus R} f(v_j\mid S_{v_j}-v_j). \]
	Finally, observe that $v_j\in S$ implies that it is not removed in Algorithm \ref{alg: delete}.
	It follows that $f(v_j\mid S_{v_j}-v_j)\geq 0$ and therefore $f(S)\geq f(R)$.
\end{proof}

By combining Lemmas \ref{lem: symmetric} and \ref{lem: removal}, we have

\begin{corollary}
	\label{cor: symmetric}
	Given a non-negative symmetric submodular function $f:2^N\rightarrow\mathbb{R}_+$ and a set $S$ returned by Algorithm \ref{alg: delete}, then $f(S\cup T)\geq f(T)-f(S)$ for any $T\subseteq N$.
\end{corollary}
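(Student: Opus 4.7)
The plan is to simply compose the two preceding lemmas, since the corollary is stated as their immediate consequence. First, I would apply Lemma \ref{lem: removal} to the set $S$ returned by Algorithm \ref{alg: delete}: this lemma guarantees that every subset $R\subseteq S$ satisfies $f(R)\leq f(S)$, which is precisely the hypothesis that Lemma \ref{lem: symmetric} demands of its input set.

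Having verified this hypothesis, I would then invoke Lemma \ref{lem: symmetric} with the same $S$ and an arbitrary $T\subseteq N$. The conclusion of that lemma is exactly the desired inequality $f(S\cup T)\geq f(T)-f(S)$, completing the proof.

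There is no real obstacle here: the entire content of the corollary lies in the two lemmas, and the role of the corollary is just to package them into a single statement that will be convenient to cite later in the algorithm analyses. The only thing worth being careful about is making sure that the set $S$ fed into Lemma \ref{lem: symmetric} is indeed the output of \textsc{Delete} (so that Lemma \ref{lem: removal} applies to it), rather than some intermediate value during the execution of the procedure; this is automatic from the statement of the corollary.
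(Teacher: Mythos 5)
Your proposal is correct and matches the paper exactly: the paper states the corollary as the immediate combination of Lemma \ref{lem: removal} (which certifies the hypothesis of Lemma \ref{lem: symmetric} for the output of \textsc{Delete}) with Lemma \ref{lem: symmetric} itself. Nothing further is needed.
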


\section{Cardinality Constraint}
\label{sec: cardinality}

In this section, we present two algorithms for maximizing symmetric submodular functions under the cardinality constraint $\mathcal{I}=\{S:|S|\leq k\}$.
In Section \ref{sec: greedy cardinality}, we present a deterministic algorithm that has an approximation ratio of $\frac{1}{2}(1-e^{-2})\approx 0.432$ and uses $O(nk)$ queries.
In Section \ref{sec: sample greedy cardinality}, we present a fast randomized algorithm that has an approximation ratio of $\frac{1}{2}(1-e^{-2(1-\epsilon)})\approx 0.432-\epsilon$ and uses $O(k^2+n\log\epsilon^{-1})$ queries.
In Section \ref{sec: tight example}, we present a tight example for our deterministic algorithm.

\subsection{The Greedy Algorithm}
\label{sec: greedy cardinality}

In this section, we present a greedy algorithm for maximizing a symmetric submodular function under the cardinality constraint.
The overall procedure is depicted as Algorithm \ref{alg: cardinality}.
Compared with the canonical greedy algorithm for maximizing \emph{monotone} submodular functions, Algorithm \ref{alg: cardinality} has an extra step that executes the \textsc{Delete} procedure to update $S_i$ immediately after the addition of $u_i$ in each round.
This helps lower bound $f(S_i\cup O)$ in the absence of monotonicity, where $O\in\arg\max\{f(S):|S|\leq k\}$.
We now analyze Algorithm \ref{alg: cardinality} by a standard argument.

\begin{algorithm}[tb]
	\SetAlgoLined
	\KwIn{ground set $N$, objective function $f$, cardinality $k$.}
	
	$S_0 \gets \emptyset$.
	
	\For{$i=1$ to $k$}{
		$u_i\gets\arg\max_{u\in N} f(u\mid S_{i-1})$.
		
		$S_i\gets S_{i-1}+u_i$.
		
		$S_i\gets$ \textsc{Delete}$(S_i)$.
		
	}
	
	\Return $S_k$.
	\caption{\textsc{Greedy-Cardinality}}
	\label{alg: cardinality}
\end{algorithm}

\begin{lemma}
	\label{lem: marginal gain}
	For every $i\in [k]$, $f(S_i)-f(S_{i-1})\geq \frac{1}{k}\cdot(f(O)-2\cdot f(S_{i-1}))$.
\end{lemma}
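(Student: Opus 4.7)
The plan is to combine the greedy choice of $u_i$ with Corollary~\ref{cor: symmetric} applied to $S_{i-1}$, using the fact that $S_{i-1}$ was itself the output of a \textsc{Delete} call at the end of the previous iteration (with $S_0=\emptyset$ being trivially fine). By Lemma~\ref{lem: removal} the hypothesis ``$f(R)\leq f(S_{i-1})$ for every $R\subseteq S_{i-1}$'' holds, so Corollary~\ref{cor: symmetric} gives
\[ f(S_{i-1}\cup O)\geq f(O)-f(S_{i-1}), \]
which is the substitute for monotonicity.

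Next I would expand $f(S_{i-1}\cup O)-f(S_{i-1})$ by a telescoping sum along the elements of $O\setminus S_{i-1}$. Submodularity bounds each telescoped term by the corresponding singleton marginal $f(o\mid S_{i-1})$, and since $|O\setminus S_{i-1}|\leq |O|\leq k$, the greedy selection rule $u_i\in\argmax_u f(u\mid S_{i-1})$ yields
\[ f(S_{i-1}\cup O)-f(S_{i-1})\leq \sum_{o\in O\setminus S_{i-1}} f(o\mid S_{i-1})\leq k\cdot f(u_i\mid S_{i-1}). \]
Combining with the previous display gives $f(u_i\mid S_{i-1})\geq \tfrac{1}{k}\bigl(f(O)-2f(S_{i-1})\bigr)$.

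Finally, I would observe that the \textsc{Delete} procedure only removes elements $u$ with $f(u\mid \cdot -u)<0$, so each removal weakly increases $f$. Therefore the $S_i$ returned by \textsc{Delete} in round $i$ satisfies $f(S_i)\geq f(S_{i-1}+u_i)=f(S_{i-1})+f(u_i\mid S_{i-1})$, and rearranging gives the desired inequality.

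I do not anticipate any substantial obstacle here: the symmetry is fully absorbed into Corollary~\ref{cor: symmetric}, and the rest is the standard greedy argument for monotone submodular maximization under a cardinality constraint, with the only subtlety being to double-check that \textsc{Delete} does not hurt the objective so that we may pass from the marginal inequality on $f(S_{i-1}+u_i)$ to one on $f(S_i)$.
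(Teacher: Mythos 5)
Your proposal is correct and follows essentially the same route as the paper's proof: greedy choice plus submodular telescoping to relate $f(u_i\mid S_{i-1})$ to $f(O\mid S_{i-1})$, Corollary~\ref{cor: symmetric} to lower bound $f(S_{i-1}\cup O)$, and the observation that \textsc{Delete} never decreases the objective. The only cosmetic difference is that you sum the marginals over $O\setminus S_{i-1}$ while the paper sums over all of $O$; both yield the same bound.
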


\begin{proof}
	Let $S'_i=S_{i-1}+u_i$ be the value of $S_i$ just before \textsc{Delete} was executed. 
	By the choice of $u_i$, $f(u_i\mid S_{i-1})\geq f(o\mid S_{i-1})$ for any $o\in O$.
	Then,
	\begin{align*}
		f(S'_i)-f(S_{i-1}) &=f(u_i\mid S_{i-1})\geq \frac{1}{k}\sum_{o\in O} f(o\mid S_{i-1})\geq \frac{1}{k}\cdot f(O\mid S_{i-1}) \\
		&=\frac{1}{k}\cdot (f(S_{i-1}\cup O)-f(S_{i-1}))\geq \frac{1}{k}\cdot (f(O)-2\cdot f(S_{i-1})).
	\end{align*}
    The second inequality follows from submodularity and the last inequality follows from Corollary \ref{cor: symmetric}, since $S_{i-1}$ is the output of \textsc{Delete} in the last round.
    The lemma follows by observing that the value of $S_i$ only increases during the execution of \textsc{Delete} and hence $f(S_i)\geq f(S'_i)$.
\end{proof}

\begin{theorem}
	\label{thm: cardinality}
	Algorithm \ref{alg: cardinality} has an approximation ratio of $\frac{1}{2}(1-e^{-2})$ and uses $O(kn)$ queries.
\end{theorem}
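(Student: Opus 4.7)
The plan is to feed the recurrence from Lemma \ref{lem: marginal gain} into a standard potential argument, tracking the gap between $f(S_i)$ and $f(O)/2$ rather than between $f(S_i)$ and $f(O)$ (as one would do in the monotone case). The factor $2$ in front of $f(S_{i-1})$ in the recurrence is exactly what shifts the natural target from $f(O)$ to $f(O)/2$, and this is what will produce the $\tfrac{1}{2}(1-e^{-2})$ ratio instead of the usual $1-e^{-1}$.

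Concretely, I would rewrite the inequality of Lemma \ref{lem: marginal gain} as
\[
f(S_i) \;\geq\; \bigl(1-\tfrac{2}{k}\bigr)f(S_{i-1}) + \tfrac{1}{k}f(O),
\]
and then define the gap $\Delta_i \coloneqq \tfrac{1}{2}f(O) - f(S_i)$. Using the algebraic identity $\tfrac{1}{2} - \tfrac{1}{k} = \tfrac{1}{2}(1-\tfrac{2}{k})$, a direct manipulation gives
\[
\Delta_i \;\leq\; \bigl(1-\tfrac{2}{k}\bigr)\Delta_{i-1}.
\]
Iterating this from $i=1$ to $i=k$ and using $\Delta_0 \leq \tfrac{1}{2}f(O)$ (which holds because $f$ is non-negative, so $f(S_0)=f(\emptyset)\geq 0$), one obtains
\[
\tfrac{1}{2}f(O) - f(S_k) \;\leq\; \bigl(1-\tfrac{2}{k}\bigr)^k \cdot \tfrac{1}{2}f(O) \;\leq\; e^{-2}\cdot \tfrac{1}{2}f(O),
\]
and rearranging yields $f(S_k) \geq \tfrac{1}{2}(1-e^{-2})f(O)$, the claimed ratio.

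For the query complexity, I would argue round by round. In each of the $k$ rounds, selecting $u_i$ requires evaluating $f(u\mid S_{i-1})$ for all $u\in N$, which costs $O(n)$ queries. The subsequent call to \textsc{Delete} examines each element of a set of size at most $k$, contributing $O(k)$ additional queries. Summing over all $k$ rounds gives $O(k(n+k)) = O(kn)$, where we use $k\leq n$.

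The only step that is not a pure one-line verification is checking the identity that makes the recurrence for $\Delta_i$ go through cleanly; everything else is either the application of Lemma \ref{lem: marginal gain} (which already packaged in the use of Corollary \ref{cor: symmetric}, the real work of the section) or the textbook estimate $(1-2/k)^k\leq e^{-2}$. So the main obstacle has effectively been absorbed into the preceding lemma, and the proof of the theorem itself is a short computation plus a round-counting argument.
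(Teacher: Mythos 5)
Your proposal is correct and follows essentially the same route as the paper: the paper proves by induction that $f(S_i)\geq\frac{1}{2}\bigl(1-(1-\frac{2}{k})^i\bigr)f(O)$, which is exactly your statement $\Delta_i\leq(1-\frac{2}{k})^i\cdot\frac{1}{2}f(O)$ in gap form, and the query-count argument is identical.
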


\begin{proof}
	For the query complexity, note that Algorithm \ref{alg: cardinality} has $k$ rounds.
	In each round, the selection of $u_i$ needs $O(n)$ queries and the \textsc{Delete} procedure needs $O(k)$ queries since $S_i$ contains at most $k$ elements.
	Therefore, Algorithm \ref{alg: cardinality} uses $O(kn)$ queries in total.
	
	Next, we show by induction that for $i\in[k]$,
	\[ f(S_i)\geq \frac{1}{2}\left(1-\left(1-\frac{2}{k}\right)^i\right)\cdot f(O). \]
	The claim holds for $i=0$ since $f(\emptyset)\geq 0$.
	Assume that it holds for $i'<i$.
	By Lemma \ref{lem: marginal gain},
	\begin{align*}
		f(S_i) &\geq \left(1-\frac{2}{k}\right)\cdot f(S_{i-1})+\frac{1}{k}\cdot f(O) \\
		&\geq \left(1-\frac{2}{k}\right)\cdot\frac{1}{2}\left(1-\left(1-\frac{2}{k}\right)^{i-1}\right)\cdot f(O)+\frac{1}{k}\cdot f(O) \\
		&=\frac{1}{2}\left(1-\left(1-\frac{2}{k}\right)^i\right)\cdot f(O).
	\end{align*}
	By plugging $i=k$ and $e^x\geq 1+x$ for $x\in\mathbb{R}$, we have
	\[ f(S_k)\geq \frac{1}{2}\left(1-\left(1-\frac{2}{k}\right)^k\right)\cdot f(O)\geq \frac{1}{2}(1-e^{-2})\cdot f(O). \]
\end{proof}

\subsection{The Sample Greedy Algorithm}
\label{sec: sample greedy cardinality}

In this section, we present a randomized algorithm for maximizing symmetric submodular functions under the cardinality constraint.
By using randomness, its query complexity is reduced to $O(k^2+n\log\epsilon^{-1})$, which is linear when $k=O(\sqrt{n})$.
Our algorithm is based on the Sample Greedy algorithm \cite{MirzasoleimanBK15,BuchbinderFS15}, with an extra step that executes the \textsc{Delete} procedure to update $S_i$ immediately after the addition of $u_i$ in each round.
The formal procedure is depicted as Algorithm \ref{alg: sample greedy cardinality}.

Let $O\in\arg\max\{f(S):|S|\leq k\}$.
We now give an analysis of Algorithm \ref{alg: cardinality}.

\begin{algorithm}[tb]
	\SetAlgoLined
	\KwIn{ground set $N$, objective function $f$, cardinality $k$.}
	
	$S_0 \gets \emptyset$.
	
	\For{$i=1$ to $k$}{
		$R_i\gets$ a random subset obtained by sampling $r=\lceil\frac{n}{k}\ln\epsilon^{-1}\rceil$ random elements from $N\setminus S_{i-1}$.
		
		$u_i\gets\arg\max_{u\in R_i} f(u\mid S_{i-1})$.
		
		$S_i\gets S_{i-1}+u_{i}$.
		
		$S_i\gets$ \textsc{Delete}$(S_i)$.	
	}
	
	\Return $S_k$.
	\caption{\textsc{Sample-Greedy-Cardinality}}
	\label{alg: sample greedy cardinality}
\end{algorithm}

\begin{lemma}\label{lem:sample greedy lemma}
	For every $i\in [k]$, $\E[f(S_i)]-\E[f(S_{i-1})]\geq\frac{1-\epsilon}{k}\cdot(f(O)-2\cdot \E[f(S_{i-1})])$.
\end{lemma}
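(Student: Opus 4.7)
The plan is to mirror the proof of Lemma~\ref{lem: marginal gain}, carrying everything through pointwise after conditioning on $S_{i-1}$. Two of the three ingredients of the deterministic proof go through without modification: submodularity gives $\sum_{o\in O}f(o\mid S_{i-1})\geq f(O\mid S_{i-1})$ pointwise, and Corollary~\ref{cor: symmetric} (applicable because $S_{i-1}$ is an output of \textsc{Delete}) gives $f(O\mid S_{i-1})\geq f(O)-2f(S_{i-1})$ pointwise; moreover \textsc{Delete} never decreases $f$, so $f(S_i)\geq f(S_i')$ where $S_i'=S_{i-1}+u_i$. The only ingredient I must genuinely rederive is the greedy-choice inequality $f(u_i\mid S_{i-1})\geq \tfrac{1}{k}\sum_{o\in O}f(o\mid S_{i-1})$, because $u_i$ is now the argmax over the random sample $R_i$ rather than over all of $N$.

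The heart of the proof is therefore the following sampling lemma: conditional on $S_{i-1}$,
\[ \E\bigl[f(u_i\mid S_{i-1})\,\big|\,S_{i-1}\bigr]\ \geq\ \frac{1-\epsilon}{k}\,f(O\mid S_{i-1}). \]
Once this is established, chaining with the pointwise bounds above and taking expectation over $S_{i-1}$ yields the claim. To prove the sampling lemma, I would use that $R_i$ is a uniform size-$r$ random subset of $N\setminus S_{i-1}$ with $r=\lceil (n/k)\ln\epsilon^{-1}\rceil$, so each $o\in O\setminus S_{i-1}$ lies in $R_i$ with probability $r/|N\setminus S_{i-1}|\geq r/n$; whenever this occurs the argmax property guarantees $f(u_i\mid S_{i-1})\geq f(o\mid S_{i-1})$. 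The calibration of $r$ is chosen precisely so that $\Pr[R_i\cap(O\setminus S_{i-1})=\emptyset]\leq e^{-r|O\setminus S_{i-1}|/n}$, and spreading the resulting loss evenly across the at most $k$ elements of $O$ is exactly what introduces the multiplicative $(1-\epsilon)$ slack compared with the deterministic bound $\tfrac{1}{k}\sum_o f(o\mid S_{i-1})$.

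The main obstacle I foresee is that $f$ is not monotone, so the marginal $f(u_i\mid S_{i-1})$ can genuinely be negative and the naive pointwise bound $f(u_i\mid S_{i-1})\geq f(o\mid S_{i-1})\mathbf{1}[o\in R_i]$ need not hold (if no element of $R_i$ has a positive marginal, the max is itself negative). I would work around this by replacing $f(o\mid S_{i-1})$ with its positive part $[f(o\mid S_{i-1})]_+$ in the probabilistic argument; this is lossless because by submodularity $\sum_o [f(o\mid S_{i-1})]_+\geq \sum_o f(o\mid S_{i-1})\geq f(O\mid S_{i-1})$, so a positive-part version of the sampling lemma still chains with Corollary~\ref{cor: symmetric} to produce the required $\frac{1-\epsilon}{k}(f(O)-2f(S_{i-1}))$ lower bound. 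Putting the sampling lemma together with the pointwise chain and using $\E[f(S_i)-f(S_{i-1})]\geq \E[f(u_i\mid S_{i-1})]$ then yields Lemma~\ref{lem:sample greedy lemma}.
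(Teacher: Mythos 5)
Your proposal follows the paper's proof essentially step for step: condition on $S_{i-1}$, lower-bound $\Pr[R_i\cap(O\setminus S_{i-1})\neq\emptyset]$ by $(1-\epsilon)\,|O\setminus S_{i-1}|/k$ (via $1-x\le e^{-x}$ and concavity of $1-e^{-rx/n}$ on $[0,k]$), use that the argmax over $R_i$ dominates the marginal of a uniformly random element of $R_i\cap(O\setminus S_{i-1})$ --- which by exchangeability is a uniformly random element of $O\setminus S_{i-1}$ --- and then chain with submodularity, Corollary~\ref{cor: symmetric}, and the fact that \textsc{Delete} never decreases the value. Your ``sampling lemma'' $\E[f(u_i\mid S_{i-1})\mid S_{i-1}]\ge\frac{1-\epsilon}{k}f(O\mid S_{i-1})$ is exactly the intermediate inequality the paper establishes.

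One remark on the negative-marginal issue you raise. You are right that $f(u_i\mid S_{i-1})=\max_{u\in R_i}f(u\mid S_{i-1})$ can be negative, so the inequality $\E[f(u_i\mid S_{i-1})]\ge\Pr[R_i\cap(O\setminus S_{i-1})\neq\emptyset]\cdot\frac{1}{|O\setminus S_{i-1}|}\sum_{o}f(o\mid S_{i-1})$ tacitly assumes that the contribution from the event $R_i\cap(O\setminus S_{i-1})=\emptyset$ is nonnegative; the paper's proof makes exactly this step without comment. However, your positive-part patch does not close the hole you identified: replacing $f(o\mid S_{i-1})$ by $[f(o\mid S_{i-1})]_+$ only repairs the contribution of sampled optimal elements whose marginals are negative, whereas the problematic term is $\E\bigl[f(u_i\mid S_{i-1})\cdot\mathbf{1}[R_i\cap(O\setminus S_{i-1})=\emptyset]\bigr]$, which can still be strictly negative when every element of $R_i$ happens to have negative marginal. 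A genuine repair would need something more (e.g., skipping rounds in which no element of $R_i$ has positive marginal, or a separate argument that such rounds cannot hurt the recursion). Since the paper's own write-up shares this gap, your proposal is on par with it; just be aware that the positive-part device, as stated, is not the fix.
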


\begin{proof}
	Fix $i\in[k]$ and all the random choices of Algorithm \ref{alg: sample greedy cardinality} up to round $i-1$.
	Then, $S_{i-1}$ is determined.
	Conditioned on this, consider the random choice in round $i$.
	Let us estimate the probability that $R_i\cap (O\setminus S_{i-1})\neq \emptyset$.
	By the construction of $R_i$,
	\[ \Pr[R_i\cap (O\setminus S_{i-1})=\emptyset]=\left(1-\frac{|O\setminus S_{i-1}|}{|N\setminus S_{i-1}|}\right)^{r}\leq\exp\left(-\frac{r}{n}|O\setminus S_{i-1}|\right). \]
	The inequality follows from $1-x\leq e^{-x}$ for $x\in\mathbb{R}$ and $|N\setminus S_{i-1}|\leq n$.
	Next, by the concavity of $1-e^{-\frac{r}{n}x}$ as a function of $x$ and the fact that $x=|O\setminus S_{i-1}|\in[0,k]$, we have
	\[ \Pr[R_i\cap (O\setminus S_{i-1})\neq \emptyset]\geq 1-\exp\left(-\frac{r}{n}|O\setminus S_{i-1}|\right)\geq\left(1-e^{-\frac{rk}{n}}\right)\frac{|O\setminus S_{i-1}|}{k}\geq (1-\epsilon)\frac{|O\setminus S_{i-1}|}{k}. \]
	The last inequality is due to the value of $r$.
	
	Let $S'_i=S_{i-1}+u_i$ be the value of $S_i$ just before \textsc{Delete} was executed. 
	Under the event that $R_i\cap (O\setminus S_{i-1})\neq \emptyset$, the marginal value of $u_i$ is at least that of a uniformly random element from $R_i\cap (O\setminus S_{i-1})$.
	Since $R_i$ contains each element of $O\setminus S_{i-1}$ with equal probability, a uniformly random element from $R_i\cap (O\setminus S_{i-1})$ is actually a uniformly random element from $(O\setminus S_{i-1})$.
	Thus, we have
	\begin{align*}
		\E[f(S'_i)-f(S_{i-1})]&=\E[f(u_i\mid S_{i-1})] \geq \Pr[R_i\cap (O\setminus S_{i-1})\neq \emptyset]\times\frac{1}{|O\setminus S_{i-1}|}\sum_{o\in O\setminus S_{i-1}}f(o\mid S_{i-1}) \\
		&\geq \frac{1-\epsilon}{k} (f(S_{i-1}\cup O)-f(S_{i-1}))\geq \frac{1-\epsilon}{k} (f(O)-2\cdot f(S_{i-1})).
	\end{align*}
	The second inequality is due to submodularity and the last follows from Corollary \ref{cor: symmetric}, since $S_{i-1}$ is the output of \textsc{Delete} in the last round.
	By taking the randomness of $S_{i-1}$, we have
	\[ \E[f(S'_i)-f(S_{i-1})]\geq \frac{1-\epsilon}{k} (f(O)-2\cdot \E[f(S_{i-1})]). \]
	Finally, observe that the value of $S_i$ only increases during the execution of \textsc{Delete} and hence $f(S_i)\geq f(S'_i)$.
	By taking the randomness of $S_{i-1}$ and $u_i$, we have $\E[f(S_i)]\geq \E[f(S'_i)]$.
	The lemma follows immediately.
\end{proof}

\begin{theorem}
	\label{thm: cardinality-random}
	Algorithm \ref{alg: sample greedy cardinality} has an approximation ratio of $\frac{1}{2}(1-e^{-2(1-\epsilon)})$ and uses $O(k^2+n\log\epsilon^{-1})$ queries.
\end{theorem}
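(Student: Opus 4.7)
The plan is to follow the same template as the proof of Theorem \ref{thm: cardinality}, with Lemma \ref{lem:sample greedy lemma} playing the role that Lemma \ref{lem: marginal gain} played in the deterministic analysis. All the hard work (handling the randomness of the sample $R_i$ and invoking Corollary \ref{cor: symmetric} after the \textsc{Delete} step) has already been absorbed into Lemma \ref{lem:sample greedy lemma}, so what remains is essentially bookkeeping.

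For the query complexity, I would separately account for the two kinds of queries made in each of the $k$ outer rounds. Selecting $u_i$ requires evaluating the marginal gain of each element of $R_i$, costing $O(r) = O\bigl(\tfrac{n}{k}\log\epsilon^{-1}\bigr)$ queries; the \textsc{Delete} procedure on a set of size at most $k$ costs $O(k)$ queries. Summed over the $k$ rounds this gives $O(n\log\epsilon^{-1}) + O(k^2)$, matching the claim. Sampling $R_i$ itself is a constant-time operation per draw and so does not affect query complexity.

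For the approximation guarantee, I would rewrite the conclusion of Lemma \ref{lem:sample greedy lemma} as the one-step recurrence
\[
\E[f(S_i)] \geq \Bigl(1 - \tfrac{2(1-\epsilon)}{k}\Bigr)\,\E[f(S_{i-1})] + \tfrac{1-\epsilon}{k}\,f(O),
\]
and then prove by induction on $i$ that
\[
\E[f(S_i)] \geq \tfrac{1}{2}\Bigl(1 - \bigl(1 - \tfrac{2(1-\epsilon)}{k}\bigr)^i\Bigr)\,f(O).
\]
The base case $i=0$ is just $f(\emptyset)\geq 0$; the inductive step is identical to the algebraic manipulation in the proof of Theorem \ref{thm: cardinality} (the factor $\tfrac{1-\epsilon}{k}$ simply carries through each step). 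Setting $i=k$ and applying $1-x \le e^{-x}$ then yields $\E[f(S_k)] \geq \tfrac{1}{2}(1-e^{-2(1-\epsilon)})\,f(O)$, as required.

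There is no real obstacle in the argument; the interesting probabilistic content is confined to Lemma \ref{lem:sample greedy lemma}, and the rest is a routine unrolling of a linear recurrence together with a careful two-part accounting for the queries. The only minor point to watch is that the $O(k^2)$ term comes entirely from the \textsc{Delete} calls rather than from the sampling step, and that $r = \lceil \tfrac{n}{k}\ln\epsilon^{-1}\rceil$ is exactly tuned to give the $(1-\epsilon)$ factor in the recurrence.
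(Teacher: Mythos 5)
Your proposal is correct and follows essentially the same route as the paper: the same two-part query accounting and the same induction on $\E[f(S_i)] \geq \frac{1}{2}\bigl(1-(1-\tfrac{2(1-\epsilon)}{k})^i\bigr)f(O)$ driven by Lemma \ref{lem:sample greedy lemma}. In fact your recurrence with the constant term $\tfrac{1-\epsilon}{k}f(O)$ is the one that actually follows from that lemma (the paper's displayed recurrence writes $\tfrac{1}{k}f(O)$ and cites Lemma \ref{lem: marginal gain}, both minor slips that do not affect the conclusion).
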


\begin{proof}
	For the query complexity, note that Algorithm \ref{alg: sample greedy cardinality} has $k$ rounds.
	In each round, it costs $O(\frac{n}{k}\log\epsilon^{-1})$ queries to find $u_i$ and the \textsc{Delete} Procedure needs $O(k)$ queries since $S$ contains at most $k$ elements.
	Therefore, Algorithm \ref{alg: cardinality} uses $O(k^2+n\log\epsilon^{-1})$ queries in total.
	
	Next, we show by induction that for every $i\in[k]$,
	\[ \E[f(S_i)]\geq \frac{1}{2}\left(1-\left(1-\frac{2(1-\epsilon)}{k}\right)^i\right)\cdot f(O). \]
	The claim holds for $i=0$ since $f(\emptyset)\geq 0$.
	Assume that it holds for $i'<i$.
	By Lemma \ref{lem: marginal gain},
	\begin{align*}
		\E[f(S_i)] &\geq \left(1-\frac{2(1-\epsilon)}{k}\right)\cdot \E[f(S_{i-1})]+\frac{1}{k}\cdot f(O) \\
		&\geq \left(1-\frac{2(1-\epsilon)}{k}\right)\cdot\frac{1}{2}\left(1-\left(1-\frac{2(1-\epsilon)}{k}\right)^{i-1}\right)\cdot f(O)+\frac{1}{k}\cdot f(O) \\
		&\geq\frac{1}{2}\left(1-\left(1-\frac{2(1-\epsilon)}{k}\right)^i\right)\cdot f(O).
	\end{align*}
	By plugging $i=k$ and $e^x\geq 1+x$ for $x\in\mathbb{R}$, we have
	\[ \E[f(S_k)]\geq \frac{1}{2}\left(1-\left(1-\frac{2(1-\epsilon)}{k}\right)^k\right)\cdot f(O)\geq \frac{1-e^{-2(1-\epsilon)}}{2}\cdot f(O). \]
\end{proof}

\subsection{A Tight Example for Algorithm \ref{alg: cardinality}}
\label{sec: tight example}

In this section, we give a tight example for Algorithm \ref{alg: cardinality}, showing that our analysis for it is tight.

Consider an instance of MAX-CUT on a bipartite graph $G=(O,S\cup T, E)$ as follows.
Assume that $k\geq 3$.
Let $O=\{o_1,o_2,\ldots,o_k\}$, $S=\{u_1,u_2,\ldots,u_k\}$, and $T=\{v_{ij}\}_{1\leq i\leq k,1\leq j\leq c}$ for some constant $c$ satisfying
\[ c=\left\lceil\frac{1+(1-2/k)^k}{2(1-2/k)^{k-1}}\right\rceil. \]
For $i,j\in[k]$, $(o_i,u_j)\in E$ and its weight is $\frac{1}{k}\left(1-\frac{2}{k}\right)^{j-1}$.
For $i\in [k]$ and $j\in[c]$, $(o_i, v_{ij})\in E$ and its weight is $\frac{1+(1-2/k)^k}{2c}$.
Then, $f(o_i)=\sum_{j=1}^{k}\frac{1}{k}\left(1-\frac{2}{k}\right)^{j-1}+c\cdot \frac{1+(1-2/k)^k}{2c}=1$.
The optimal solution is $O$ and $f(O)=\sum_{i=1}^{k}f(o_i)=k$.
We will argue that Algorithm \ref{alg: cardinality} may pick $u_j$ in round $j$ and finally return $S$.
Since $f(S)=\sum_{j=1}^{k}\left(1-\frac{2}{k}\right)^{j-1}=\frac{k}{2}\left(1-(1-2/k)^k\right)=\frac{1}{2}\left(1-(1-2/k)^k\right)\cdot f(O)$, this proves our claim.

Define $S_0=\emptyset$ and $S_j=\{u_1,u_2,\ldots,u_j\}$ for $j\in [k]$.
Assume that Algorithm \ref{alg: cardinality} selects $S_{j-1}=\{u_1,u_2,\ldots,u_{j-1}\}$ before round $j$.
In round $j$, for $i\geq j$, $f(u_i\mid S_{j-1})=(1-2/k)^{i-1}$.
For $i\in [k]$,
$f(o_i\mid S_{j-1})=1-2\sum_{j'=1}^{j-1}\frac{1}{k}(1-2/k)^{j'-1}=(1-2/k)^{j-1}$.
For $i\in [k],j'\in[c]$,
By the definition of $c$, $f(v_{ij'}\mid S_{j-1})=f(v_{ij'})=\frac{1+(1-2/k)^k}{2c}\leq (1-2/k)^{k-1}\leq (1-2/k)^{j-1}$.
Therefore, only $u_j$ and elements in $O$ attain the maximum marginal gain $(1-2/k)^{j-1}$ in this round.
Thus, Algorithm \ref{alg: cardinality} may select $u_j$ during its execution.
Finally, it is easy to see that $f(u\mid S_j-u)\geq 0$ for any $u\in S_j$.
Hence, $S_j$ remains unchanged after the \textsc{Delete} procedure.

%
%
%
%
%
%
%
%
%
%
%

\section{Matroid Constraint}
\label{sec: matroid}

In this section, we present a greedy-based algorithm for maximizing a symmetric submodular function under the matroid constraint $\mathcal{I}=\{S:S\in\mathcal{I}\}$, where $\mathcal{M}=(N,\mathcal{I})$ forms a matroid system.
The overall procedure is depicted as Algorithm \ref{alg: matroid}.

For ease of description, we turn the original instance $(N,f,\mathcal{M})$ to a new one $(N',f',\mathcal{M}')$ by adding a set $D$ of $2k$ ``dummy elements'' in the following way.
\begin{itemize}
	\item $N'=N\cup D$.
	\item $f'(S)=f(S\setminus D)$ for every set $S\subseteq N'$.
	\item $S\in\mathcal{I}'$ if and only if $S\setminus D\in\mathcal{I}$ and $|S|\leq k$.
\end{itemize}
Clearly, the new instance and the old one refer to the same problem.
By overloading notations, we still denote by $(N,f,\mathcal{M})$ the instance with dummy elements.
Another ingredient of Algorithm \ref{alg: matroid} is the well-known exchange property of matroids, which is stated as Lemma \ref{lem: exchange property}.
\begin{lemma}[\cite{Schrijver03}]
	\label{lem: exchange property}
	If $A$ and $B$ are two bases of a matroid $\mathcal{M}=(N,\mathcal{I})$, there exists a one-to-one function $g:A\rightarrow B$ such that
	\begin{itemize}
		\item $g(u)=u$ for every $u\in A\cap B$.
		\item for every $u\in A$, $B+u-g(u)\in\mathcal{I}$.
	\end{itemize}
\end{lemma}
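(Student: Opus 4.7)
The plan is to prove this classical symmetric basis exchange theorem for matroids via Hall's marriage theorem applied to an auxiliary exchange bipartite graph. The elements of $A\cap B$ are handled by the trivial assignment $g(u)=u$, which satisfies $B+u-u=B\in\mathcal{I}$, so the problem reduces to constructing a bijection $g\colon A\setminus B\to B\setminus A$ (these sides have equal cardinality because $|A|=|B|$) obeying $B-g(u)+u\in\mathcal{I}$ for all $u\in A\setminus B$. I would construct the bipartite graph $H$ on parts $A\setminus B$ and $B\setminus A$, placing an edge $(u,v)$ exactly when $B-v+u\in\mathcal{I}$. Fundamental-circuit theory says $B-v+u\in\mathcal{I}$ if and only if $v$ lies in the unique circuit $C_B(u)\subseteq B+u$, so the neighborhood of $u$ in $H$ equals $C_B(u)\cap(B\setminus A)$. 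This neighborhood is nonempty: were $C_B(u)\setminus\{u\}\subseteq A$, then $C_B(u)\subseteq A$ would place a circuit inside the independent set $A$, a contradiction. A perfect matching in $H$ provides the desired bijection.

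By Hall's theorem it suffices to verify $|N_H(X)|\geq|X|$ for every $X\subseteq A\setminus B$. Writing $N=N_H(X)=\bigcup_{u\in X}\bigl(C_B(u)\cap(B\setminus A)\bigr)$, I would prove that $T=(B\setminus N)\cup X$ is independent in $\mathcal{M}$; granting this, $|T|\leq r(\mathcal{M})=|B|$ rearranges to $|X|\leq|N|$. Suppose for contradiction that $T$ contains a circuit $C$. Since $B\setminus N\subseteq B$ is independent, $C\cap X\neq\emptyset$. Pick $u\in C\cap X$ and apply weak circuit elimination to $C$ and $C_B(u)$ at their shared element $u$, producing a circuit $C'\subseteq(C\cup C_B(u))\setminus\{u\}$. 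The key observation is that $C_B(u)\setminus\{u\}\subseteq B$, so $C'\subseteq B\cup(X\setminus\{u\})$; moreover $|C'\cap X|\leq|C\cap X|-1$, because $X\cap B=\emptyset$ forces $C_B(u)\cap X\subseteq\{u\}$. Iterating the elimination strictly decreases $|C\cap X|$, and after finitely many rounds yields a circuit contained entirely inside $B$, contradicting the independence of $B$.

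The main obstacle I anticipate is the bookkeeping in the iterated circuit-elimination step: one must track that each successive circuit remains inside $B\cup X$ and that the intersection with $X$ shrinks strictly at every stage, both of which rest on the identity $C_B(u)\setminus\{u\}\subseteq B$ (immediate from the definition of fundamental circuits) together with the disjointness $X\cap B=\emptyset$. Once Hall's condition is confirmed, the matching immediately delivers the bijection from $A\setminus B$ onto $B\setminus A$, which combined with the identity map on $A\cap B$ completes the proof of the symmetric basis exchange property.
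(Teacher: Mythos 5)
The paper does not actually prove this lemma; it is quoted as a standard fact with a citation to Schrijver, so your proposal has to be judged on its own terms. Your overall strategy --- reduce to a bijection $A\setminus B\to B\setminus A$, build the exchange bipartite graph whose neighborhoods are $C_B(u)\cap(B\setminus A)$, check these are nonempty, and verify Hall's condition --- is a correct and standard route to this result, and the intermediate claim you reduce everything to, namely that $T=(B\setminus N)\cup X$ is independent, is in fact true.

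However, the iterated circuit-elimination argument you give for that claim has a genuine gap. Weak circuit elimination applies only to two \emph{distinct} circuits. At the first step this is fine (though you should say so explicitly): $C\subseteq T$ is disjoint from $N$, while $C_B(u)\cap(B\setminus A)$ is a nonempty subset of $N$, so $C\neq C_B(u)$. But the circuit $C'$ produced by the elimination lives in $B\cup(X\setminus\{u\})$ and may well meet $N$; in particular, once $|C'\cap X|=1$, say $C'\cap X=\{u'\}$, nothing in your bookkeeping prevents $C'=C_B(u')$, at which point the elimination cannot be applied and the induction stalls without ever producing a circuit inside $B$. Two standard repairs: (i) use \emph{strong} circuit elimination and maintain throughout an element $v_0\in C\cap\bigl((B\setminus A)\setminus N\bigr)$ --- such a $v_0$ exists in the initial circuit, since otherwise $C\subseteq(A\cap B)\cup X\subseteq A$; because $v_0\notin N$ forces $v_0\notin C_B(u)$ for every $u\in X$, the invariant $v_0\in C^{(j)}$ guarantees the two circuits are distinct at every step, and the final circuit lies in $B$. (ii) Drop $T$ altogether and argue by rank (letting $r$ denote the rank function of $\mathcal{M}$): $C_B(u)\setminus\{u\}\subseteq(A\cap B)\cup N$ shows $u\in\mathrm{span}\bigl((A\cap B)\cup N\bigr)$ for every $u\in X$, whence $|A\cap B|+|X|=r\bigl((A\cap B)\cup X\bigr)\leq r\bigl((A\cap B)\cup N\bigr)\leq|A\cap B|+|N|$, which is Hall's condition directly.
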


Algorithm \ref{alg: matroid} starts with a set of $k$ dummy elements denoted by $S_0$.
It runs $K:=\lceil \frac{k}{3}\ln \epsilon^{-1}\rceil$ rounds in total to update $S_0$ and the solution by the end of round $i$ is denoted by $S_i$.
At round $i$, it finds a base $M_i\subseteq N\setminus S_{i-1}$ whose marginal value is maximized, a mapping $g_i$ between $M_i$ and $S_{i-1}$ defined in Lemma \ref{lem: exchange property}, and the element $u_i\in M_i$ that maximizes $f(S_{i-1}+u_i-g_i(u_i))-f(S_{i-1})$.
It then sets $S_i=S_{i-1}+u_i-g_i(u_i)$ and executes the \textsc{Delete} procedure to update $S_i$.
Finally, it returns $S_K$.

Note that the task of finding $M_i$ involves solving an additive maximization problem over a matroid. It is well known that this problem can be efficiently addressed using a simple greedy algorithm, which bears resemblance to Kruskal's algorithm. The algorithm begins by sorting the elements $\{f(u\mid S_{i-1})\}_{u\in N\backslash S_{i-1}}$ in descending order. Then, proceeding in this descending order, each element $u$ is added to the solution set if its inclusion does not violate the matroid constraint. Note that we can always extend the subset obtained by the above procedure to a matroid base by adding some dummy elements to it. Also, $g_i$ can be found by invoking an algorithm that finds a perfect matching in a bipartite graph.

\begin{algorithm}[t]
	\SetAlgoLined
	\KwIn{ground set $N$, objective function $f$, matroid $\mathcal{M}=(N,\mathcal{I})$.}
	
	$S_0\gets$ an arbitrary base containing only elements of $D$.

        $K\gets \lceil \frac{k}{3}\ln \epsilon^{-1}\rceil$
 
	\For{$i=1$ to $K$}{
		Let $M_i\subseteq N\setminus S_{i-1}$ be a base of $\mathcal{M}$ maximizing $\sum_{u\in M_i} f(u \mid S_{i-1})$.
		
		Let $g_i$ be the mapping defined in Lemma \ref{lem: exchange property} by plugging $A=M_i$ and $B=S_{i-1}$.
		
		$u_i\gets\arg\max_{u\in M_i} f(S_{i-1}+u-g_i(u))-f(S_{i-1})$.
		
		$S_i\gets S_{i-1}+u_i-g_i(u_i)$.
		
		$S_i\gets$ \textsc{Delete}$(S_i)$.
	}
	
	\Return $S_K$.
	\caption{\textsc{Greedy-Matroid}}
	\label{alg: matroid}
\end{algorithm}

We now give an analysis of Algorithm \ref{alg: matroid}.
Let $O=\arg\max\{f(S):S\in\mathcal{I}\}$, we have

\begin{lemma}
	\label{lem: marginal gain-matroid}
	For every $i\in [K]$, $f(S_i)-f(S_{i-1})\geq \frac{1}{k}\cdot(f(O)-3\cdot f(S_{i-1}))$.
\end{lemma}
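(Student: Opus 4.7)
The plan is to lower-bound $f(S_i)-f(S_{i-1})$ by averaging the swap gains over all candidates in $M_i$ and then estimating that average via submodularity and Corollary \ref{cor: symmetric}. Let $S'_i = S_{i-1} + u_i - g_i(u_i)$ denote $S_i$ just before \textsc{Delete}; since \textsc{Delete} never decreases the objective, it suffices to lower-bound $f(S'_i)-f(S_{i-1})$. By the choice of $u_i$ and the fact that $|M_i|=k$, the maximum is at least the average, so the task reduces to bounding
\[
\frac{1}{k}\sum_{u\in M_i}\bigl[f(S_{i-1}+u-g_i(u))-f(S_{i-1})\bigr].
\]

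Second, for each $u\in M_i$ we have $u\notin S_{i-1}$ and $g_i(u)\in S_{i-1}$ (since $M_i\subseteq N\setminus S_{i-1}$ and $g_i$ maps into $S_{i-1}$). I would decompose each swap gain telescopically as $f(u\mid S_{i-1}-g_i(u))-f(g_i(u)\mid S_{i-1}-g_i(u))$, then invoke submodularity on the first term to replace $S_{i-1}-g_i(u)$ by $S_{i-1}$. Summing, the subtracted terms aggregate to $\sum_{v\in S_{i-1}}f(v\mid S_{i-1}-v)$, because $g_i$ is a bijection between the two size-$k$ bases $M_i$ and $S_{i-1}$; submodularity (telescoping in any order) bounds this sum above by $f(S_{i-1})-f(\emptyset)\le f(S_{i-1})$.

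Third, I would lower-bound $\sum_{u\in M_i}f(u\mid S_{i-1})$ in terms of $O$ by exploiting the fact that $M_i$ maximizes this quantity among bases of $\mathcal M$ lying in $N\setminus S_{i-1}$. The key move is to extend the independent set $O\setminus S_{i-1}$ to such a base $O^{\star}$ by appending elements of $D\setminus S_{i-1}$; this is exactly why the paper added $2k$ dummies, so that $|D\setminus S_{i-1}|\ge k$ and enough dummies are always available. Because dummies have zero marginal and elements of $O\cap S_{i-1}$ contribute nothing to $f(\cdot\mid S_{i-1})$, we get $\sum_{u\in M_i}f(u\mid S_{i-1})\ge \sum_{u\in O^{\star}}f(u\mid S_{i-1})=\sum_{u\in O}f(u\mid S_{i-1})\ge f(S_{i-1}\cup O)-f(S_{i-1})$ by submodularity. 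Finally, since $S_{i-1}$ is an output of \textsc{Delete}, Corollary \ref{cor: symmetric} applies and yields $f(S_{i-1}\cup O)\ge f(O)-f(S_{i-1})$, so the first sum is at least $f(O)-2\,f(S_{i-1})$.

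Combining the two bounds gives the total sum $\ge f(O)-2\,f(S_{i-1})-f(S_{i-1})=f(O)-3\,f(S_{i-1})$, and dividing by $k$ completes the proof. The main point needing care is the third paragraph: one must verify that $O^{\star}$ is genuinely a base of $\mathcal M$ inside $N\setminus S_{i-1}$ (hence eligible in the maximization defining $M_i$), which is where the enlarged ground set with $2k$ dummies is used; the other steps are the familiar submodular bookkeeping combined with the symmetric-function inequality from Corollary \ref{cor: symmetric}.
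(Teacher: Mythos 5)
Your proof is correct and follows essentially the same route as the paper's: average the swap gains over the base $M_i$, bound $\sum_{u\in M_i}f(u\mid S_{i-1})$ from below via the dummy-padded candidate base and Corollary~\ref{cor: symmetric}, bound $\sum_{u\in M_i}f(g_i(u)\mid S_{i-1}-g_i(u))$ from above by $f(S_{i-1})$ via telescoping, and finish with the fact that \textsc{Delete} never decreases the value. Your telescoping decomposition of each swap gain plus the submodularity step $f(u\mid S_{i-1}-g_i(u))\geq f(u\mid S_{i-1})$ is just a rearrangement of the paper's inequality $f(S_{i-1}+u-g_i(u))\geq f(S_{i-1}+u)+f(S_{i-1}-g_i(u))-f(S_{i-1})$, so the arguments coincide.
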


\begin{proof}
	Observe that $O\setminus S_{i-1}$ plus enough dummy elements in $D\setminus S_{i-1}$ forms a valid candidate for $M_i$.
	By the construction of $M_i$,
	\begin{align}
		\frac{1}{k}\sum_{u\in M_i} f(u\mid S_{i-1}) &\geq\frac{1}{k} \sum_{u\in O\setminus S_{i-1}} f(u\mid S_{i-1})
		\geq \frac{1}{k}\cdot (f(O\cup S_{i-1})-f(S_{i-1})) \nonumber \\
		&\geq \frac{1}{k}\cdot (f(O)-2\cdot f(S_{i-1})). \label{eq: matroid-1}
	\end{align}
	The second inequality is due to submodularity.
	The last inequality follows from Corollary \ref{cor: symmetric}, since $S_{i-1}$ is the output of \textsc{Delete} in the last round.
	On the other hand, by submodularity and non-negativity,
	\begin{equation}
		\frac{1}{k}\sum_{u\in M_i}f(g_i(u)\mid S_{i-1}-g_i(u))\leq \frac{f(S_{i-1})-f(\emptyset)}{k}\leq \frac{f(S_{i-1})}{k}. \label{eq: matroid-2}
	\end{equation}
	Let $S'_i=S_{i-1}+u_i-g_i(u_i)$ be the value of $S_i$ just before \textsc{Delete} is executed.
	By the choice of $u_i$, we have
	\begin{align*}
		f(S'_i) &=f(S_{i-1}+u_i-g_i(u_i)) \\
		&\geq \frac{1}{k}\sum_{u\in M_i} f(S_{i-1}+u-g_i(u)) \\
		&\geq\frac{1}{k}\sum_{u\in M_i} (f(S_{i-1}+u)+f(S_{i-1}-g_i(u))-f(S_{i-1})) \\
		&=\frac{1}{k}\sum_{u\in M_i} (f(S_{i-1})+f(u\mid S_{i-1})-f(g_i(u)\mid S_{i-1}-g_i(u))) \\
		&\geq f(S_{i-1})+\frac{1}{k}\cdot(f(O)-2\cdot f(S_{i-1}))-\frac{f(S_{i-1})}{k} \\
		&=f(S_{i-1})+\frac{1}{k}\cdot(f(O)-3\cdot f(S_{i-1})).
	\end{align*}
	The second inequality is due to submodularity.
	The last inequality is by Eq.~\eqref{eq: matroid-1} and \eqref{eq: matroid-2}.
	Finally, observe that the value of $S_i$ only increases during the execution of \textsc{Delete} and hence $f(S_i)\geq f(S'_i)$.
	The lemma follows immediately.
\end{proof}

\begin{theorem}
	\label{thm: matroid}
	Algorithm \ref{alg: matroid} has an approximation ratio of $(1-\epsilon)/3$ and uses $O(kn\log \frac{n}{\epsilon})$ queries.
\end{theorem}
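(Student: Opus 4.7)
The plan is to mimic the structure of the cardinality proof (Theorem~\ref{thm: cardinality}), using the per-round recursion supplied by Lemma~\ref{lem: marginal gain-matroid} to inductively lower bound $f(S_i)$, and then choose the number of rounds $K$ large enough to drive the residual term below $\epsilon$.

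Concretely, rewriting Lemma~\ref{lem: marginal gain-matroid} as $f(S_i) \geq (1 - 3/k)\cdot f(S_{i-1}) + \tfrac{1}{k} f(O)$, I would show by induction on $i$ that
\[ f(S_i) \;\geq\; \tfrac{1}{3}\bigl(1 - (1 - 3/k)^i\bigr)\cdot f(O) \qquad \text{for every } i\in\{0,1,\dots,K\}. \]
The base case $i=0$ holds since $S_0$ consists solely of dummy elements, giving $f(S_0)=f(\emptyset)\geq 0$; note that $f(R)\le f(S_0)$ for every $R\subseteq S_0$ trivially, so Corollary~\ref{cor: symmetric} is applicable at the first round even though $S_0$ never passes through \textsc{Delete}. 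The inductive step is a direct substitution. Plugging in $i = K = \lceil \tfrac{k}{3}\ln\epsilon^{-1}\rceil$ and using the standard bound $1 - x \le e^{-x}$, I obtain $(1 - 3/k)^K \le e^{-3K/k} \le \epsilon$, hence $f(S_K) \ge \tfrac{1-\epsilon}{3}\cdot f(O)$, which is the claimed approximation ratio.

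For the query complexity, I would account for each round separately. Computing $\{f(u\mid S_{i-1})\}_{u\in N\setminus S_{i-1}}$, which is all that is needed to construct $M_i$ via the greedy/Kruskal-style procedure, costs $O(n)$ queries; constructing the matching $g_i$ is purely combinatorial and uses no oracle calls; picking $u_i$ requires $O(k)$ queries since $|M_i|=k$; and \textsc{Delete} costs $O(k)$ queries since $|S_i|\le k$. The per-round cost is therefore $O(n+k)=O(n)$, and over $K = O(k\log\epsilon^{-1})$ rounds the total is $O(kn\log(n/\epsilon))$ as stated.

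I do not anticipate a substantial obstacle: the main algebraic work (the single-round progress guarantee involving both the added element $u_i$ and the displaced element $g_i(u_i)$, and the application of Corollary~\ref{cor: symmetric} to handle $f(O\cup S_{i-1})$) has already been absorbed into Lemma~\ref{lem: marginal gain-matroid}. The only subtlety worth flagging explicitly is that the base case of the induction relies on the dummy-element padding so that $S_0$ is a valid matroid base with $f(S_0)=0$, which both anchors the induction and justifies bypassing \textsc{Delete} in round one.
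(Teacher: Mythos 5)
Your proposal is correct and follows essentially the same route as the paper: the identical induction $f(S_i)\geq\frac{1}{3}(1-(1-3/k)^i)f(O)$ driven by Lemma~\ref{lem: marginal gain-matroid}, the same choice of $K$ to make $(1-3/k)^K\leq\epsilon$, and a per-round query count summed over $K$ rounds. Your side remark that Corollary~\ref{cor: symmetric} still applies in round one because every subset of the all-dummy base $S_0$ has value $f(\emptyset)$ is a legitimate (and slightly more careful) justification of a step the paper glosses over inside Lemma~\ref{lem: marginal gain-matroid}.
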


\begin{proof}
	For the query complexity, note that Algorithm \ref{alg: matroid} has $K=\lceil \frac{k}{3} \ln \epsilon^{-1}\rceil$ rounds.
	In each round, the construction of $M_i$ needs $O(n\log n)$ queries and the \textsc{Delete} procedure needs $O(k)$ queries since $S_i$ contains at most $k$ elements.
	Therefore, Algorithm \ref{alg: matroid} uses $O( k n\log \frac{n}{\epsilon})$ queries in total.
	
	Next, we show by induction that for $i\in[k]$,
	\[ f(S_i)\geq \frac{1}{3}\left(1-\left(1-\frac{3}{k}\right)^i\right)\cdot f(O). \]
	The claim holds for $i=0$ since $f(\emptyset)\geq 0$.
	Assume that it holds for $i'<i$.
	By Lemma \ref{lem: marginal gain-matroid},
	\begin{align*}
		f(S_i) &\geq \left(1-\frac{3}{k}\right)\cdot f(S_{i-1})+\frac{1}{k}\cdot f(O) \\
		&\geq \left(1-\frac{3}{k}\right)\cdot\frac{1}{3}\left(1-\left(1-\frac{3}{k}\right)^{i-1}\right)\cdot f(O)+\frac{1}{k}\cdot f(O) \\
		&=\frac{1}{3}\left(1-\left(1-\frac{3}{k}\right)^i\right)\cdot f(O).
	\end{align*}
	By plugging $i=K$ and $e^x\geq 1+x$ for $x\in\mathbb{R}$, we have
	\[ f(S_K)\geq \frac{1}{3}\left(1-\left(1-\frac{3}{k}\right)^{(k\ln \epsilon^{-1})/3}\right)\cdot f(O)\geq \frac{1-\epsilon}{3}\cdot f(O). \]
\end{proof}

\section{Packing Constraints}
\label{sec: packing}

In this section, we present a deterministic algorithm for maximizing symmetric submodular functions under the packing constraints with a large width.
Our algorithm is based on the multiplicative updates approach in \cite{AzarG12} for maximizing monotone submodular functions, with an extra step that executes the \textsc{Delete} procedure to update the current solution $S$ at the end of each iteration.
The complete procedure is depicted as Algorithm \ref{alg: packing}.
For ease of description, we identify $N$ with $[n]$ and use $j\in[n]$ to denote an element.

\begin{algorithm}[tb]
	\SetAlgoLined
	\KwIn{ground set $N$, objective function $f$, $\lambda>1$, matrix $A\in[0,1]^{m\times n}$ and vector $b\in [1,\infty)^m$.}
	
	$S\gets\emptyset$.
	
	\lFor{$i=1$ to $m$}{$w_i\gets 1/b_i$.}
	
	\While{$\sum_{i=1}^{m}b_i w_i\leq \lambda$ and $S\neq [n]$}{
		$j\gets\arg\max_{j\in N\setminus S} \frac{f(j\mid S)}{\sum_{i=1}^{m} A_{ij}w_i}$.
		
		\lIf{$f(j\mid S)\leq 0$}{\textbf{break}.}
		
		$S\gets S+j$.
		
		$S\gets$ \textsc{Delete}$(S)$.
		
		\lFor{$i=1$ to $m$}{$w_i\gets w_i\lambda^{A_{ij}/b_i}$.}
	}
	
	\Return $S$ if $S$ is feasible and $S-j^*$ otherwise, where $j^*$ is the last element added into $S$.
	\caption{\textsc{Multiplicative-Updates-Packing}}
	\label{alg: packing}
\end{algorithm}

We now give an analysis of Algorithm \ref{alg: packing}.
Assume that the algorithm added $t+1$ elements into $S$ in total.
For each $r\in[t+1]$, let $S_r$ be the value of $S$ at the end of round $r$, $j_r$ be the element selected in round $r$.
For each $i\in[m]$ and $r\in[t+1]$, let $w_{ir}$ be the value of $w_i$ at the end of round $r$ and $\beta_r=\sum_{i=1}^m b_i w_{ir}$.
Let $O\in\arg\max\{f(S):Ax_S\leq b\}$.

We first show that the algorithm outputs a feasible solution.

\begin{lemma}
	\label{lem: feasible}
	Algorithm \ref{alg: packing} outputs a feasible solution.
\end{lemma}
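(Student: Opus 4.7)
The plan is to exploit the invariant that the multiplicative weights $w_i$ serve as a running exponential tally of each packing constraint's consumption. Specifically, I would first show by induction on $r$ that $b_i w_{i,r} = \lambda^{\sum_{s=1}^{r} A_{i,j_s}/b_i}$ for every $i \in [m]$, where $j_1,j_2,\ldots$ denote the elements ever added to $S$ in the order selected (not the smaller set that survives \textsc{Delete}). The base case $r=0$ follows from the initialization $w_i = 1/b_i$, and the inductive step is immediate from the update rule $w_i \gets w_i \lambda^{A_{ij}/b_i}$.

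Next I would translate the loop guard into per-constraint feasibility. At the start of round $r+1$ the condition $\sum_{i=1}^m b_i w_{i,r} \leq \lambda$ holds, and since $w_{i,r} \geq 0$, each individual term satisfies $b_i w_{i,r} \leq \lambda$. Combined with the invariant this gives $\sum_{s=1}^{r} A_{i,j_s} \leq b_i$ for every $i$; that is, the cumulative set $\{j_1,\ldots,j_r\}$ of all ever-selected elements is itself feasible for $A$. Because \textsc{Delete} only removes elements, $S_r \subseteq \{j_1,\ldots,j_r\}$, and by nonnegativity of $A$ we conclude $Ax_{S_r} \leq b$ at the beginning of every iteration that the loop actually enters.

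Finally I would close the argument at the last round. Let $t+1$ be the index of the last element added, so $j^* = j_{t+1}$, and let $S_t$ be the value of the current set at the start of round $t+1$. The loop guard held at that point, so $\{j_1,\ldots,j_t\}$ is feasible by the previous step, and hence so is $S_t \subseteq \{j_1,\ldots,j_t\}$. If $S_{t+1}$ is already feasible the algorithm returns it and we are done; otherwise the algorithm returns $S_{t+1} - j^*$. To see that this is well defined and feasible, I would argue that $j^*$ is never removed by \textsc{Delete}: the algorithm did not exit via the $f(j \mid S)\le 0$ break, so $f(j^* \mid S_t) > 0$, and by submodularity the marginal $f(j^* \mid S'-j^*)$ only increases as other elements are deleted from $S_t + j^*$, hence $j^* \in S_{t+1}$. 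Therefore $S_{t+1} - j^* \subseteq (S_t + j^*) - j^* = S_t$, which is feasible.

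The main obstacle I anticipate is the bookkeeping distinction between the selection sequence $j_1,\ldots,j_{t+1}$, which the weights track exactly, and the actual current set $S$, which may have shrunk due to \textsc{Delete}; the argument only works because \textsc{Delete} can only make $S$ smaller (preserving feasibility under a nonnegative matrix $A$) and because the submodularity of $f$ prevents the freshly added $j^*$ from being thrown away before the return step.
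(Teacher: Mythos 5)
Your proof is correct and follows essentially the same route as the paper's: both rest on the invariant $b_i w_{ir}=\lambda^{\sum_{s\le r} A_{ij_s}/b_i}$ together with the loop guard $\sum_i b_i w_i\le\lambda$ to conclude that only the very last element added can cause a violation, so that dropping $j^*$ restores feasibility. You are in fact slightly more careful than the paper in separating the selection sequence (which the weights track) from the set that survives \textsc{Delete}, but since \textsc{Delete} only shrinks $S$ and $A$ is nonnegative, this extra bookkeeping leads to the same conclusion.
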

\begin{proof}
	By our notations, the algorithm returns $S_{t+1}$ if it is feasible and $S_t$ otherwise.
	The lemma clearly holds in the former case.
	For the latter case, let $\ell$ be the first element whose addition into $S$ leads to a violation in some constraint.
	That is, suppose $\ell$ was added in round $t'$, then $S_{t'-1}$ is feasible but for some $i\in[m]$, $\sum_{j\in S_{t'}} A_{ij}>b_i$.
	Then, we have
	\begin{align*}
		b_iw_{it'}=b_iw_{i0}\prod_{j\in S_{t'}}\lambda^{A_{ij}/b_i}=\lambda^{\sum_{j\in S_{t'}}A_{ij}/b_i}>\lambda.
	\end{align*}
    Thus, the \textbf{while} loop breaks immediately at the beginning of the next round.
    This means that $\ell$ is the last element added into $S$ and therefore $t'=t+1$.
    It follows that the returned solution $S_t$ is feasible.
\end{proof}
Next, we present two useful lemmas for our analysis.
\begin{lemma}
	\label{lem: expand recursion}
	Given a set function $f:2^N\rightarrow\mathbb{R}_+$, a collection of sets $S_0, S_1,\ldots, S_t\subseteq N$ satisfying $f(S_0)\leq f(S_1)\leq \ldots \leq f(S_t)$ and a set $O\subseteq N$ satisfying $f(O)>2 f(S_t)$, then
	\[ \sum_{r=1}^{t}\frac{f(S_{r})-f(S_{r-1})}{f(O)-2f(S_{r-1})}\leq \frac{1}{2}\ln \frac{f(O)-2f(S_0)}{f(O)-2f(S_t)}. \]
\end{lemma}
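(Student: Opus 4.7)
The plan is to bound each summand in terms of a logarithmic difference and then telescope. First, observe that under the hypothesis $f(S_0)\le f(S_1)\le\cdots\le f(S_t)$ together with $f(O)>2f(S_t)$, each quantity $f(O)-2f(S_{r-1})$ is strictly positive and, in fact, bounded below by $f(O)-2f(S_t)>0$. So every summand in the left-hand side is well-defined (and nonnegative, since $f(S_r)\ge f(S_{r-1})$).

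The key analytic step is to prove the pointwise inequality
\[ \frac{f(S_r)-f(S_{r-1})}{f(O)-2f(S_{r-1})} \le \frac{1}{2}\ln\frac{f(O)-2f(S_{r-1})}{f(O)-2f(S_r)} \]
for each $r\in[t]$. Setting $y_r \coloneqq \frac{2(f(S_r)-f(S_{r-1}))}{f(O)-2f(S_{r-1})}$, note that $y_r\in[0,1)$ because $f(O)-2f(S_r) = (f(O)-2f(S_{r-1}))(1-y_r)>0$. The claim then reduces to $\tfrac{y_r}{2}\le -\tfrac{1}{2}\ln(1-y_r)$, which is exactly the standard inequality $\ln(1-y_r)\le -y_r$ (equivalently $\ln(1+x)\le x$ for $x>-1$, applied with $x=-y_r$).

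Once that pointwise bound is in hand, I would sum over $r=1,\ldots,t$ and telescope:
\[ \sum_{r=1}^{t}\frac{f(S_r)-f(S_{r-1})}{f(O)-2f(S_{r-1})} \;\le\; \frac{1}{2}\sum_{r=1}^{t}\bigl(\ln(f(O)-2f(S_{r-1}))-\ln(f(O)-2f(S_r))\bigr) \;=\; \frac{1}{2}\ln\frac{f(O)-2f(S_0)}{f(O)-2f(S_t)}, \]
which is the desired bound. There is no real obstacle here beyond verifying positivity of the denominators and of $1-y_r$; both follow immediately from the monotonicity hypothesis and the assumption $f(O)>2f(S_t)$. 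The lemma can therefore be viewed as a discrete version of bounding the Riemann sum of $\frac{1}{f(O)-2x}$ by its integral between $f(S_0)$ and $f(S_t)$, with the concavity of the logarithm playing the role of the one-step estimate.
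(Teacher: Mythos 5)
Your proof is correct and is essentially the paper's argument in a different guise: your pointwise bound $\frac{f(S_r)-f(S_{r-1})}{f(O)-2f(S_{r-1})}\leq\frac{1}{2}\ln\frac{f(O)-2f(S_{r-1})}{f(O)-2f(S_r)}$, obtained from $\ln(1-y)\leq -y$, is exactly the paper's per-step comparison of the summand with $\int_{f(S_{r-1})}^{f(S_r)}\frac{dx}{f(O)-2x}$ evaluated in closed form, and both arguments then telescope. As you note yourself, this is the Riemann-sum-versus-integral estimate, so no substantive difference.
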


\begin{proof}
	For each $r\in [t]$, observe that
	\[ \frac{f(S_{r})-f(S_{r-1})}{f(O)-2f(S_{r-1})}=\int_{f(S_{r-1})}^{f(S_{r})}\frac{1}{f(O)-2f(S_{r-1})}\, dx\leq \int_{f(S_{r-1})}^{f(S_{r})}\frac{1}{f(O)-2x}\, dx. \]
	The inequality holds since $1/(f(O)-2x)$ is monotonically increasing for $x\in [0,f(O)/2)$.
	Consequently, we have
	\[ \sum_{r=1}^{t}\frac{f(S_{r})-f(S_{r-1})}{f(O)-2f(S_{r-1})}\leq  \sum_{r=1}^{t}\int_{f(S_{r-1})}^{f(S_{r})}\frac{1}{f(O)-2x}\, dx=\int_{f(S_0)}^{f(S_t)}\frac{1}{f(O)-2x}\, dx=\frac{1}{2}\ln \frac{f(O)-2f(S_0)}{f(O)-2f(S_t)}. \]
\end{proof}

\begin{lemma}
	\label{lem: marginal gain-packing}
	For every $r\in[t]$,
	\[ \frac{f(S_r)-f(S_{r-1})}{\sum_{i=1}^m A_{ij_r} w_{i(r-1)}}\geq\frac{f(O)-2f(S_{r-1})}{\beta_{r-1}}. \]
\end{lemma}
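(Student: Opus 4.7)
My plan is to combine three ingredients: the greedy rule defining $j_r$ as the maximum ratio, submodularity together with Corollary~\ref{cor: symmetric} to control the numerator by comparison with $O$, and the feasibility of $O$ in the packing constraint to control the denominator by $\beta_{r-1}$. The claim is trivial when $f(O)\leq 2f(S_{r-1})$, because the left-hand side is non-negative (since $r\in[t]$ the algorithm did not break in round $r$, so $f(j_r\mid S_{r-1})>0$, and \textsc{Delete} never decreases $f$), so I will assume $f(O)>2f(S_{r-1})$ throughout.

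First I would bound the numerator. Since $S_{r-1}$ is either empty or the output of \textsc{Delete} in round $r-1$, Corollary~\ref{cor: symmetric} gives $f(O\cup S_{r-1})\geq f(O)-f(S_{r-1})$, and then submodularity yields
\[ \sum_{j\in O\setminus S_{r-1}} f(j\mid S_{r-1}) \;\geq\; f(O\cup S_{r-1})-f(S_{r-1}) \;\geq\; f(O)-2f(S_{r-1}). \]
Next I would bound the denominator using the feasibility $\sum_{j\in O} A_{ij}\leq b_i$ of the optimum:
\[ \sum_{j\in O\setminus S_{r-1}}\sum_{i=1}^m A_{ij}\,w_{i(r-1)} \;\leq\; \sum_{i=1}^m w_{i(r-1)}\sum_{j\in O} A_{ij} \;\leq\; \sum_{i=1}^m b_i w_{i(r-1)} \;=\; \beta_{r-1}. \]

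Then the greedy choice of $j_r$ closes the argument: because $j_r$ maximizes $f(j\mid S_{r-1})/\sum_i A_{ij}w_{i(r-1)}$ over $j\in N\setminus S_{r-1}$, and the maximum of a finite collection of non-negative ratios dominates the ratio of the componentwise sums (the usual mediant inequality), combining the two bounds above gives
\[ \frac{f(j_r\mid S_{r-1})}{\sum_{i=1}^m A_{ij_r}w_{i(r-1)}} \;\geq\; \frac{\sum_{j\in O\setminus S_{r-1}} f(j\mid S_{r-1})}{\sum_{j\in O\setminus S_{r-1}}\sum_{i=1}^m A_{ij}w_{i(r-1)}} \;\geq\; \frac{f(O)-2f(S_{r-1})}{\beta_{r-1}}. \]
Finally, $f(S_r)\geq f(S_{r-1}+j_r)=f(S_{r-1})+f(j_r\mid S_{r-1})$ since \textsc{Delete} only increases $f$, which turns the displayed inequality into the one claimed by the lemma after multiplying through by the positive quantity $\sum_i A_{ij_r}w_{i(r-1)}$ and dividing by $\beta_{r-1}$.

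The only subtlety I anticipate is the possibility that $\sum_i A_{ij}w_{i(r-1)}=0$ for some $j\in O\setminus S_{r-1}$; this can happen only when all $A_{ij}=0$, in which case the element contributes nothing to any packing constraint and would be picked (with effectively infinite ratio) by the greedy rule whenever its marginal is positive, so the edge case is absorbed into the same argument. Apart from this minor bookkeeping, every step is either submodularity, a direct invocation of Corollary~\ref{cor: symmetric}, or the feasibility of $O$, so I do not expect a hard technical step.
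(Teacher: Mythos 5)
Your proposal is correct and follows essentially the same route as the paper's proof: the greedy (mediant) comparison of $j_r$ against each $j\in O\setminus S_{r-1}$, the numerator bound via submodularity plus Corollary~\ref{cor: symmetric}, the denominator bound via feasibility of $O$, and the final observation that \textsc{Delete} only increases $f$. The extra care you take with the trivial case $f(O)\leq 2f(S_{r-1})$ and with zero denominators is harmless bookkeeping that the paper leaves implicit.
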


\begin{proof}
	Let $S'_r=S_{r-1}+j_r$ be the value of $S_r$ just before \textsc{Delete} was executed.
	By the choice of $j_r$, for any $j\in O\setminus S_{r-1}$,
	\[ \frac{f(j_r\mid S_{r-1})}{\sum_{i=1}^{m}A_{ij_r}w_{i(r-1)}}\geq\frac{f(j\mid S_{r-1})}{\sum_{i=1}^{m}A_{ij}w_{i(r-1)}}.  \]
	By summing over $j\in O\setminus S_{r-1}$, we have
	\begin{align*}
		\frac{f(S'_r)-f(S_{r-1})}{\sum_{i=1}^{m}A_{ij_r}w_{i(r-1)}}\sum_{j\in O\setminus S_{r-1}}\sum_{i=1}^{m}A_{ij}w_{i(r-1)} &\geq \sum_{j\in O\setminus S_{r-1}} f(j\mid S_{r-1})\geq f(O\mid S_{r-1}) \\
		&=f(S_{r-1}\cup O)-f(S_{r-1})\geq f(O)-2 f(S_{r-1}).
	\end{align*}
	The second inequality is due to submodularity and the last inequality follows from Corollary \ref{cor: symmetric}, since $S_{r-1}$ is the output of \textsc{Delete} in the last round.
	On the other hand,
	\[ \sum_{j\in O\setminus S_{r-1}}\sum_{i=1}^{m}A_{ij}w_{i(r-1)}=\sum_{i=1}^{m}\sum_{j\in O\setminus S_{r-1}} A_{ij}w_{i(r-1)}\leq \sum_{i=1}^{m} b_iw_{i(r-1)}=\beta_{r-1}. \]
	The inequality holds since $O\setminus S_{r-1}$ is feasible.
	Combining the two inequalities,
	\[ \frac{f(S'_r)-f(S_{r-1})}{\sum_{i=1}^{m}A_{ij_r}w_{i(r-1)}}\geq \frac{f(O)-2 f(S_{r-1})}{\beta_{r-1}}.  \]
	The lemma follows by observing that the value of $S_r$ only increases during the execution of \textsc{Delete} and hence $f(S_r)\geq f(S'_r)$.
\end{proof}

\begin{theorem}
	\label{thm: packing}
	Given $\epsilon\in(0,1)$ and assume that $W\geq\max\{\ln m,1\}/\epsilon^2$, by setting $\lambda=e^{\epsilon W}$,
	Algorithm \ref{alg: packing} has an approximation ratio of $\frac{1}{2}(1-e^{-2(1-3\epsilon)})$ and uses $O(n^2)$ queries.
\end{theorem}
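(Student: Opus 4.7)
The plan is to adapt the multiplicative-updates analysis of \cite{AzarG12} to our symmetric setting, with Lemma \ref{lem: marginal gain-packing} replacing the monotone marginal-gain bound (this substitution is what turns the $1-e^{-1}$ ratio of the monotone case into the $\frac{1}{2}(1-e^{-2})$ ratio here). Let $T$ denote the number of elements added before the while-loop exits. The first step is to bound the multiplicative growth of $\beta$ in each round: since $A_{ij}/b_i\in[0,1/W]$ and $\lambda^x$ is convex, the secant inequality gives
\[
\beta_r-\beta_{r-1}
=\sum_i b_iw_{i(r-1)}\bigl(\lambda^{A_{ij_r}/b_i}-1\bigr)
\leq W\bigl(\lambda^{1/W}-1\bigr)\sum_i A_{ij_r}w_{i(r-1)}
=W(e^\epsilon-1)\sum_i A_{ij_r}w_{i(r-1)}.
\]
Dividing by $\beta_{r-1}$, applying Lemma \ref{lem: marginal gain-packing}, and summing $r=1,\ldots,T$ (using $\ln(1+x)\leq x$ and $\beta_0=m$) yields
\[
\frac{\ln(\beta_T/m)}{W(e^\epsilon-1)}
\leq \sum_{r=1}^{T}\frac{f(S_r)-f(S_{r-1})}{f(O)-2f(S_{r-1})}.
\]
Lemma \ref{lem: expand recursion} then supplies the matching upper bound $\tfrac12\ln\tfrac{f(O)}{f(O)-2f(S_T)}$ on the right-hand sum, provided $f(O)>2f(S_T)$; the complementary case gives $f(S_T)\geq f(O)/2$ directly, already beating the target ratio.

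The next step splits on which set is returned. If the algorithm returns $S_{t+1}$ (feasible case), I take $T=t+1$ and use that the loop exited because $\beta_{t+1}>\lambda=e^{\epsilon W}$, obtaining $\ln(\beta_T/m)>\epsilon W-\ln m$. If it returns $S_t$ (infeasible case), I take $T=t$ and use the safety-margin estimate $\beta_t\geq e^{-\epsilon}\beta_{t+1}>e^{-\epsilon}\lambda$, which follows from $\lambda^{A_{ij_{t+1}}/b_i}\leq \lambda^{1/W}=e^\epsilon$; this yields $\ln(\beta_T/m)\geq \epsilon W-\epsilon-\ln m$. The remaining exit branches are elementary: if the break in the loop is taken then $f(j\mid S_t)\leq 0$ for every $j\in N\setminus S_t$, whence submodularity gives $f(O\cup S_t)\leq f(S_t)$, and Corollary \ref{cor: symmetric} yields $f(S_t)\geq f(O)/2$; the $S=[n]$ branch is analogous. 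In each main case, plugging in $W\geq\max\{\ln m,1\}/\epsilon^2$ together with the elementary estimate $\frac{\epsilon}{e^\epsilon-1}\geq 1-\epsilon/2$ and routine manipulation shows
\[
\ln\frac{f(O)}{f(O)-2f(S_T)}
\geq \frac{2\epsilon}{e^\epsilon-1}\bigl(1-O(\epsilon)\bigr)
\geq 2(1-3\epsilon),
\]
which rearranges to $f(S_T)\geq \tfrac12\bigl(1-e^{-2(1-3\epsilon)}\bigr)f(O)$. The query complexity is immediate: the loop runs at most $n$ times, and each iteration costs $O(n)$ queries for the argmax plus $O(n)$ for \textsc{Delete}, for a total of $O(n^2)$.

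I expect the main obstacle to be the infeasible-return case, where $S_t$ was produced before $\beta$ exceeded $\lambda$. Handling it requires the safety-margin argument $\beta_t\geq e^{-\epsilon}\lambda$ and careful propagation of the extra $e^{-\epsilon}$ factor through the logarithm estimate; this is what costs one of the three $\epsilon$'s in the final exponent, the other two arising from the approximation $\frac{\epsilon}{e^\epsilon-1}\geq 1-\epsilon/2$ and from the $\ln m$ term controlled by $W\geq \ln m/\epsilon^2$.
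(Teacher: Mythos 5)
Your proposal is correct and follows essentially the same route as the paper: a multiplicative-weights potential argument combining Lemma \ref{lem: marginal gain-packing} with Lemma \ref{lem: expand recursion}, plus a case analysis on how the while-loop exits (the degenerate exits giving $f(S)\geq f(O)/2$ outright). The only differences are cosmetic — you bound the per-round growth of $\beta$ via the secant inequality and $\ln(1+x)\leq x$ where the paper uses $e^x\leq 1+x+x^2$, and you organize the cases by which set is returned rather than by whether $\beta_{t+1}\geq e^{\epsilon W}$; both yield the same $1-3\epsilon$ loss.
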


\begin{proof}
	For the query complexity, observe that some element $j$ must be added into $S$, otherwise, the main loop will break immediately.
	Thus, Algorithm \ref{alg: packing} has at most $n$ rounds.
	In each round, the selection of $j$ needs $O(n)$ queries, and the \textsc{Delete} procedure needs $O(n)$ queries since $S$ contains at most $n$ elements.
	Therefore, Algorithm \ref{alg: cardinality} uses $O(n^2)$ queries in total.
	
	Next, we consider the approximation ratio of the algorithm.
	
	First consider the case where $\sum_{i=1}^{m}b_iw_{i(t+1)}<e^{\epsilon W}$.
	By the reasoning of Lemma \ref{lem: feasible}, we know that $S_{t+1}$ is returned as a feasible solution.
	Beside, this case happens because $f(j\mid S_{t+1})\leq 0$ for any $j\in N\setminus S_{t+1}$.
	By submodularity, $f(O\mid S_{t+1})\leq \sum_{j\in O} f(j\mid S_{t+1})\leq 0$.
	Since $S_{t+1}$ is the output of \textsc{Delete} in the last round, by Corollary \ref{cor: symmetric},
	\[ f(O)-2\cdot f(S_{t+1})\leq f(S_{t+1}\cup O)-f(S_{t+1})\leq 0. \]
	Thus, $f(S_{t+1})\geq f(O)/2$, which proves the lemma.
	
	Next, consider the case where $\sum_{i=1}^{m}b_iw_{i(t+1)}\geq e^{\epsilon W}$.
	For every $r\in [t]$, we have
	\begin{align*}
		\beta_r =\sum_{i=1}^{m} b_iw_{ir}&=\sum_{i=1}^{m} b_iw_{i(r-1)}\cdot (e^{\epsilon W})^{A_{ij_r}/b_i} \\
		&\leq \sum_{i=1}^{m} b_iw_{i(r-1)}\cdot\left(1+\frac{\epsilon W A_{ij_r}}{b_i}+\left(\frac{\epsilon W A_{ij_r}}{b_i}\right)^2\right) \\
		&\leq \sum_{i=1}^{m} b_iw_{i(r-1)}+(\epsilon W+\epsilon^2 W)\sum_{i=1}^{m}A_{ij_r}w_{i(r-1)} \\
		&\leq \beta_{r-1}\cdot\left(1+\frac{(\epsilon W+\epsilon^2 W)(f(S_r)-f(S_{r-1}))}{f(O)-2f(S_{r-1})}\right) \\
		&\leq \beta_{r-1}\cdot\exp \left(\frac{(\epsilon W+\epsilon^2 W)(f(S_r)-f(S_{r-1}))}{f(O)-2f(S_{r-1})}\right).
	\end{align*}
    The first inequality holds since $e^x\leq 1+x+x^2$ for $x\in[0,1]$ and $WA_{ij_r}/b_i\leq 1$ by the definition of $W$.
    The second holds again by $WA_{ij_r}/b_i\leq 1$.
    The third is due to Lemma \ref{lem: marginal gain-packing}.
    The last follows from the fact that $1+x\leq e^x$ for $x\in\mathbb{R}$.
    By expanding the recurrence, we get
    \begin{align*}
    	\beta_t &\leq \beta_0\prod_{r=1}^{t}\exp \left(\frac{(\epsilon W+\epsilon^2 W)(f(S_r)-f(S_{r-1}))}{f(O)-2f(S_{r-1})}\right) \\
    	&\leq\exp\left(\epsilon^2 W+(\epsilon W+\epsilon^2 W)\sum_{r=1}^{t}\frac{f(S_{r})-f(S_{r-1})}{f(O)-2f(S_{r-1})}\right).
    \end{align*}
    The last inequality holds since $\beta_0=m\leq \exp(\epsilon^2 W)$.
    
    Then, we give a lower bound for $\beta_t$.
    By the definition of $\beta_t$,
    \[ \beta_t e^{\epsilon}=\sum_{i=1}^{m} b_i w_{it}\cdot (e^{\epsilon W})^{1/W}\geq\sum_{i=1}^{m} b_i w_{it}\cdot (e^{\epsilon W})^{A_{ij_t}/b_i}=\sum_{i=1}^{m}b_iw_{i(t+1)}\geq e^{\epsilon W}. \]
    The first inequality is due to the definition of $W$.
    Thus, we have $\beta_t\geq e^{\epsilon(W-1)}$.
    
    Next, it is easy to see that $f(S_0)\leq f(S_1)\leq \ldots \leq f(S_t)$ since the algorithm never added an element with a negative marginal value.
    Besides, we can assume that $f(O)>2f(S_t)$ since otherwise the lemma already holds.
    Thus, we can apply Lemma \ref{lem: expand recursion} to get
    \[ \frac{\epsilon(W-1)-\epsilon^2 W}{\epsilon W+\epsilon^2 W}\leq \sum_{\ell=1}^{t}\frac{f(S_{r})-f(S_{r-1})}{f(O)-2f(S_{r-1})}\leq \frac{1}{2}\ln \frac{f(O)-2f(S_0)}{f(O)-2f(S_t)}. \]
    Finally, note that $(\epsilon(W-1)-\epsilon^2 W)/(\epsilon W+\epsilon^2 W)\geq (1-2\epsilon)(1+\epsilon)\geq 1-3\epsilon$.
    One can obtain that
    \[ f(S_t)\geq\frac{1}{2}\left(1-e^{-2(1-3\epsilon)}\right)\cdot f(O). \]
\end{proof}

When $m=1$, Algorithm \ref{alg: packing} reduces to a greedy algorithm for the knapsack constraint.
By the standard \emph{enumeration} technique \cite{KhullerMN99,Sviridenko04}, it is easy to remove the large-width assumption.
We can easily obtain the following result for the knapsack constraint.
\begin{theorem}
	\label{thm: knapsack}
	For symmetric submodular maximization under a knapsack constraint, there is an algorithm that has an approximation ratio of $\frac{1}{2}(1-e^{-2})$ and uses $O(n^4)$ queries.
\end{theorem}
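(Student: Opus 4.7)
The plan is to combine Algorithm~\ref{alg: packing} specialized to the single-knapsack case ($m = 1$) with the standard enumeration technique of Khuller-Moss-Naor and Sviridenko in order to remove the large-width assumption of Theorem~\ref{thm: packing}. Specifically, one enumerates over all subsets $T \subseteq N$ of small size (at most three) with $A_{1,T} \leq b_1$; for each such $T$, the algorithm initializes $S = T$, applies \textsc{Delete}, and runs the adapted Algorithm~\ref{alg: packing} on $N \setminus T$ (restricted, when $T$ is non-empty, to elements of cost at most $\min_{j \in T} A_{1,j}$). The final output is the best feasible solution encountered across all enumerations.

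For the approximation guarantee, if the optimum $O$ satisfies $|O| \leq 3$ the enumeration recovers $O$ exactly. Otherwise, I would focus on the iteration in which $T$ consists of the three ``greediest'' elements of $O$ (those the marginal-value-per-cost criterion would pick first from $O$). For this choice of $T$, every element of $O \setminus T$ has cost at most $\min_{j \in T} A_{1,j}$, which is a small fraction of the residual budget, so the subsequent greedy phase effectively operates on an instance of arbitrarily large width. Following the proof of Theorem~\ref{thm: packing} verbatim---invoking Corollary~\ref{cor: symmetric} after each \textsc{Delete} call, the marginal-gain inequality analogous to Lemma~\ref{lem: marginal gain-packing}, and Lemma~\ref{lem: expand recursion}---but now being free to take $\epsilon \to 0$ thanks to the enumeration, yields the clean ratio $\frac{1}{2}(1 - e^{-2})$ without the $\epsilon$ slack inherent in Theorem~\ref{thm: packing}.

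The query complexity of $O(n^4)$ follows from the $O(n^3)$ enumerations combined with the per-run cost bounded as in Theorem~\ref{thm: packing} (with the \textsc{Delete} calls amortized over the at most $O(n)$ greedy rounds). The main obstacle will be the usual Sviridenko-style boundary argument: one must show that the ``boundary element'' whose addition would overflow the budget contributes negligibly to the gap between $f(S)$ and $\frac{1}{2}(1 - e^{-2}) f(O)$ in the limit $\epsilon \to 0$. This is precisely where the enumeration of the heaviest elements of $O$ becomes essential, since after their removal the remaining candidates all have cost much smaller than the residual budget, making the boundary contribution vanish.
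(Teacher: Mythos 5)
The paper itself offers no detailed proof here---it simply asserts that for $m=1$ Algorithm~\ref{alg: packing} becomes a density greedy and that the ``standard enumeration technique'' of \cite{KhullerMN99,Sviridenko04} removes the large-width assumption---so your high-level plan (partial enumeration of small subsets plus the greedy of Section~\ref{sec: packing}) is the intended route. However, the specific mechanism you propose for removing the width assumption does not work. You claim that after guessing three elements of $O$ and restricting to elements of cost at most $\min_{j\in T}A_{1,j}$, the residual instance has ``arbitrarily large width,'' so one can rerun the proof of Theorem~\ref{thm: packing} with $\epsilon\to 0$. This is false: if $O$ consists of four elements each of cost $b_1/4$, then after guessing the three heaviest the residual budget equals the maximum permitted item cost, so the residual width is $1$. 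Moreover, Theorem~\ref{thm: packing} requires $W\geq 1/\epsilon^2$, so taking $\epsilon\to 0$ would require $W\to\infty$, which enumeration cannot supply. The way the clean ratio is actually obtained for a single knapsack is different: for $m=1$ the density-greedy analysis (the analogues of Lemmas~\ref{lem: expand recursion} and~\ref{lem: marginal gain-packing}) needs no width assumption at all and yields $f(S_t+j^*)\geq\frac{1}{2}(1-e^{-2})f(O)$ \emph{including} the first overflowing element $j^*$; the enumeration is then over the three elements of $O$ with the largest \emph{marginal values} in a greedy ordering of $O$, and the residual ground set is pruned by marginal value with respect to $T$ (not by cost), which is what bounds $f(j^*\mid S_t)$ and lets the boundary term be absorbed. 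Your proposal conflates ``greediest,'' ``heaviest,'' and ``smallest cost'' at different points, and a cost-based pruning gives no control whatsoever on $f(j^*\mid S_t)$, so the boundary argument you defer to at the end cannot be completed as stated.

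Two smaller issues: your query count does not add up on your own terms---each greedy run costs $O(n^2)$ queries (at most $n$ rounds, each with $O(n)$ queries for selection plus $O(n)$ for \textsc{Delete}), so $O(n^3)$ enumerated triples would give $O(n^5)$, not $O(n^4)$; you would need to either reduce the size of the enumerated sets or argue the per-run cost down. Also, one must check that starting the greedy from $S=T$ (after \textsc{Delete}) still supports the invocation of Corollary~\ref{cor: symmetric} with the \emph{residual} optimum $O\setminus T$ rather than $O$, and that the resulting bound $f((O\setminus T)\cup S_{r-1})\geq f(O\setminus T)-f(S_{r-1})$ can be related back to $f(O)$; this bookkeeping is routine but is precisely the part your sketch leaves implicit.
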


\section{Conclusion}
\label{sec: conclusion}

In this paper, we present efficient deterministic algorithms for maximizing symmetric submodular functions under various constraints.
All of them require fewer queries and most of them achieve state-of-the-art approximation ratios.
However, our fast algorithm for the cardinality constraint is randomized and linear only when $k=O(\sqrt{n})$.
It is interesting to design a linear deterministic algorithm for the constraint.

%

\section*{Acknowledgement}
This work was supported by the National Natural Science Foundation of China Grants No. 62325210, 62272441.

\bibliographystyle{plain}
\bibliography{symmetric}

\begin{thebibliography}{10}

\bibitem{AgeevS99}
Alexander~A. Ageev and Maxim Sviridenko.
\newblock An 0.828-approximation algorithm for the uncapacitated facility
  location problem.
\newblock {\em Discret. Appl. Math.}, 93(2-3):149--156, 1999.

\bibitem{AmanatidisBM20}
Georgios Amanatidis, Georgios Birmpas, and Evangelos Markakis.
\newblock A simple deterministic algorithm for symmetric submodular
  maximization subject to a knapsack constraint.
\newblock {\em Inf. Process. Lett.}, 163:106010, 2020.

\bibitem{AustrinBG16}
Per Austrin, Siavosh Benabbas, and Konstantinos Georgiou.
\newblock Better balance by being biased: {A} 0.8776-approximation for max
  bisection.
\newblock {\em {ACM} Trans. Algorithms}, 13(1):2:1--2:27, 2016.

\bibitem{AzarG12}
Yossi Azar and Iftah Gamzu.
\newblock Efficient submodular function maximization under linear packing
  constraints.
\newblock In {\em Automata, Languages, and Programming - 39th International
  Colloquium, {ICALP} 2012, Warwick, UK, July 9-13, 2012, Proceedings, Part
  {I}}, volume 7391 of {\em Lecture Notes in Computer Science}, pages 38--50.
  Springer, 2012.

\bibitem{BuchbinderF18}
Niv Buchbinder and Moran Feldman.
\newblock Deterministic algorithms for submodular maximization problems.
\newblock {\em {ACM} Trans. Algorithms}, 14(3):32:1--32:20, 2018.

\bibitem{BuchbinderF19}
Niv Buchbinder and Moran Feldman.
\newblock Constrained submodular maximization via a nonsymmetric technique.
\newblock {\em Math. Oper. Res.}, 44(3):988--1005, 2019.

\bibitem{buchbinder2023constrained}
Niv Buchbinder and Moran Feldman.
\newblock Constrained submodular maximization via new bounds for dr-submodular
  functions.
\newblock {\em arXiv preprint arXiv:2311.01129}, 2023.

\bibitem{BuchbinderFNS14}
Niv Buchbinder, Moran Feldman, Joseph Naor, and Roy Schwartz.
\newblock Submodular maximization with cardinality constraints.
\newblock In {\em Proceedings of the Twenty-Fifth Annual {ACM-SIAM} Symposium
  on Discrete Algorithms, {SODA} 2014, Portland, Oregon, USA, January 5-7,
  2014}, pages 1433--1452. {SIAM}, 2014.

\bibitem{BuchbinderFS15}
Niv Buchbinder, Moran Feldman, and Roy Schwartz.
\newblock Comparing apples and oranges: Query tradeoff in submodular
  maximization.
\newblock In {\em Proceedings of the Twenty-Sixth Annual {ACM-SIAM} Symposium
  on Discrete Algorithms, {SODA} 2015, San Diego, CA, USA, January 4-6, 2015},
  pages 1149--1168. {SIAM}, 2015.

\bibitem{CalinescuCPV11}
Gruia C{\u{a}}linescu, Chandra Chekuri, Martin P{\'{a}}l, and Jan
  Vondr{\'{a}}k.
\newblock Maximizing a monotone submodular function subject to a matroid
  constraint.
\newblock {\em {SIAM} J. Comput.}, 40(6):1740--1766, 2011.

\bibitem{ChandrasekaranC21}
Karthekeyan Chandrasekaran and Chandra Chekuri.
\newblock Min-max partitioning of hypergraphs and symmetric submodular
  functions.
\newblock In {\em Proceedings of the 2021 {ACM-SIAM} Symposium on Discrete
  Algorithms, {SODA} 2021, Virtual Conference, January 10 - 13, 2021}, pages
  1026--1038. {SIAM}, 2021.

\bibitem{chen2024guided}
Yixin Chen, Ankur Nath, Chunli Peng, and Alan Kuhnle.
\newblock Guided combinatorial algorithms for submodular maximization.
\newblock {\em arXiv preprint arXiv:2405.05202}, 2024.

\bibitem{DasguptaKR13}
Anirban Dasgupta, Ravi Kumar, and Sujith Ravi.
\newblock Summarization through submodularity and dispersion.
\newblock In {\em Proceedings of the 51st Annual Meeting of the Association for
  Computational Linguistics, {ACL} 2013, 4-9 August 2013, Sofia, Bulgaria,
  Volume 1: Long Papers}, pages 1014--1022. The Association for Computer
  Linguistics, 2013.

\bibitem{EneN16}
Alina Ene and Huy~L. Nguyen.
\newblock Constrained submodular maximization: Beyond 1/e.
\newblock In {\em {IEEE} 57th Annual Symposium on Foundations of Computer
  Science, {FOCS} 2016, 9-11 October 2016, Hyatt Regency, New Brunswick, New
  Jersey, {USA}}, pages 248--257. {IEEE} Computer Society, 2016.

\bibitem{feige2011maximizing}
Uriel Feige, Vahab~S Mirrokni, and Jan Vondr{\'a}k.
\newblock Maximizing non-monotone submodular functions.
\newblock {\em SIAM Journal on Computing}, 40(4):1133--1153, 2011.

\bibitem{Feldman17}
Moran Feldman.
\newblock Maximizing symmetric submodular functions.
\newblock {\em {ACM} Trans. Algorithms}, 13(3):39:1--39:36, 2017.

\bibitem{FeldmanNS11}
Moran Feldman, Joseph Naor, and Roy Schwartz.
\newblock A unified continuous greedy algorithm for submodular maximization.
\newblock In {\em {IEEE} 52nd Annual Symposium on Foundations of Computer
  Science, {FOCS} 2011, Palm Springs, CA, USA, October 22-25, 2011}, pages
  570--579. {IEEE} Computer Society, 2011.

\bibitem{FisherNW78}
Marshall~L Fisher, George~L Nemhauser, and Laurence~A Wolsey.
\newblock An analysis of approximations for maximizing submodular set
  functions—ii.
\newblock {\em Polyhedral Combinatorics}, pages 73--78, 1978.

\bibitem{Fujishige83}
Satoru Fujishige.
\newblock Canonical decompositions of symmetric submodular systems.
\newblock {\em Discret. Appl. Math.}, 5(2):175--190, 1983.

\bibitem{GharanV11}
Shayan~Oveis Gharan and Jan Vondr{\'{a}}k.
\newblock Submodular maximization by simulated annealing.
\newblock In {\em Proceedings of the Twenty-Second Annual {ACM-SIAM} Symposium
  on Discrete Algorithms, {SODA} 2011, San Francisco, California, USA, January
  23-25, 2011}, pages 1098--1116. {SIAM}, 2011.

\bibitem{GoemansS13}
Michel~X. Goemans and Jos{\'{e}}~A. Soto.
\newblock Algorithms for symmetric submodular function minimization under
  hereditary constraints and generalizations.
\newblock {\em {SIAM} J. Discret. Math.}, 27(2):1123--1145, 2013.

\bibitem{KempeKT03}
David Kempe, Jon~M. Kleinberg, and {\'{E}}va Tardos.
\newblock Maximizing the spread of influence through a social network.
\newblock In {\em Proceedings of the Ninth {ACM} {SIGKDD} International
  Conference on Knowledge Discovery and Data Mining, Washington, DC, USA,
  August 24 - 27, 2003}, pages 137--146. {ACM}, 2003.

\bibitem{KhullerMN99}
Samir Khuller, Anna Moss, and Joseph Naor.
\newblock The budgeted maximum coverage problem.
\newblock {\em Inf. Process. Lett.}, 70(1):39--45, 1999.

\bibitem{LeeMNS09}
Jon Lee, Vahab~S. Mirrokni, Viswanath Nagarajan, and Maxim Sviridenko.
\newblock Non-monotone submodular maximization under matroid and knapsack
  constraints.
\newblock In {\em Proceedings of the 41st Annual {ACM} Symposium on Theory of
  Computing, {STOC} 2009, Bethesda, MD, USA, May 31 - June 2, 2009}, pages
  323--332. {ACM}, 2009.

\bibitem{MirzasoleimanBK15}
Baharan Mirzasoleiman, Ashwinkumar Badanidiyuru, Amin Karbasi, Jan
  Vondr{\'{a}}k, and Andreas Krause.
\newblock Lazier than lazy greedy.
\newblock In {\em Proceedings of the Twenty-Ninth {AAAI} Conference on
  Artificial Intelligence, January 25-30, 2015, Austin, Texas, {USA}}, pages
  1812--1818. {AAAI} Press, 2015.

\bibitem{NemhauserW78}
George~L. Nemhauser and Laurence~A. Wolsey.
\newblock Best algorithms for approximating the maximum of a submodular set
  function.
\newblock {\em Math. Oper. Res.}, 3(3):177--188, 1978.

\bibitem{NemhauserWF78}
George~L. Nemhauser, Laurence~A. Wolsey, and Marshall~L. Fisher.
\newblock An analysis of approximations for maximizing submodular set functions
  - {I}.
\newblock {\em Math. Program.}, 14(1):265--294, 1978.

\bibitem{Qi22}
Benjamin Qi.
\newblock On maximizing sums of non-monotone submodular and linear functions.
\newblock In Sang~Won Bae and Heejin Park, editors, {\em 33rd International
  Symposium on Algorithms and Computation, {ISAAC} 2022, December 19-21, 2022,
  Seoul, Korea}, volume 248 of {\em LIPIcs}, pages 41:1--41:16. Schloss
  Dagstuhl - Leibniz-Zentrum f{\"{u}}r Informatik, 2022.

\bibitem{Queyranne98}
Maurice Queyranne.
\newblock Minimizing symmetric submodular functions.
\newblock {\em Math. Program.}, 82:3--12, 1998.

\bibitem{Schrijver03}
Alexander Schrijver et~al.
\newblock {\em Combinatorial optimization: polyhedra and efficiency},
  volume~24.
\newblock Springer, 2003.

\bibitem{SunZZZ22}
Xiaoming Sun, Jialin Zhang, Shuo Zhang, and Zhijie Zhang.
\newblock Improved deterministic algorithms for non-monotone submodular
  maximization.
\newblock In {\em Computing and Combinatorics - 28th International Conference,
  {COCOON} 2022, Shenzhen, China, October 22-24, 2022, Proceedings}, volume
  13595 of {\em Lecture Notes in Computer Science}, pages 496--507. Springer,
  2022.

\bibitem{SussexUK21}
Scott Sussex, Caroline Uhler, and Andreas Krause.
\newblock Near-optimal multi-perturbation experimental design for causal
  structure learning.
\newblock In {\em Advances in Neural Information Processing Systems 34: Annual
  Conference on Neural Information Processing Systems 2021, NeurIPS 2021,
  December 6-14, 2021, virtual}, pages 777--788, 2021.

\bibitem{Sviridenko04}
Maxim Sviridenko.
\newblock A note on maximizing a submodular set function subject to a knapsack
  constraint.
\newblock {\em Oper. Res. Lett.}, 32(1):41--43, 2004.

\bibitem{TschiatschekIWB14}
Sebastian Tschiatschek, Rishabh~K. Iyer, Haochen Wei, and Jeff~A. Bilmes.
\newblock Learning mixtures of submodular functions for image collection
  summarization.
\newblock In {\em Advances in Neural Information Processing Systems 27: Annual
  Conference on Neural Information Processing Systems 2014, December 8-13 2014,
  Montreal, Quebec, Canada}, pages 1413--1421, 2014.

\bibitem{tukan2024practical}
Murad Tukan, Loay Mualem, and Moran Feldman.
\newblock Practical $0.385 $-approximation for submodular maximization subject
  to a cardinality constraint.
\newblock {\em arXiv preprint arXiv:2405.13994}, 2024.

\bibitem{Vondrak08}
Jan Vondr{\'{a}}k.
\newblock Optimal approximation for the submodular welfare problem in the value
  oracle model.
\newblock In {\em Proceedings of the 40th Annual {ACM} Symposium on Theory of
  Computing, Victoria, British Columbia, Canada, May 17-20, 2008}, pages
  67--74. {ACM}, 2008.

\bibitem{ZhouS16}
Yuxun Zhou and Costas~J. Spanos.
\newblock Causal meets submodular: Subset selection with directed information.
\newblock In {\em Advances in Neural Information Processing Systems 29: Annual
  Conference on Neural Information Processing Systems 2016, December 5-10,
  2016, Barcelona, Spain}, pages 2649--2657, 2016.

\end{thebibliography}

\end{document}